\documentclass{CSML}

\def\dOi{13(2:5)2017}
\lmcsheading%
{\dOi}
{1--40}
{}
{}
{Nov.~18, 2015}
{May\phantom.~12, 2017}
{}

\keywords{static analysis, inter-procedural program analysis, procedure summaries, herbrand equalities}

\usepackage[utf8]{inputenc}
\usepackage[T1]{fontenc}
\usepackage[english]{babel}
\usepackage{graphicx}
\usepackage{tikz}
\usepackage[
  bookmarks,
  pdfauthor={Stefan Schulze Frielinghaus, Michael Petter, Helmut Seidl},
  pdftitle={Inter-procedural Two-Variable Herbrand Equalities},
  pdfborder={0 0 0}
  ]{hyperref}
\usepackage{array}
\usepackage{stmaryrd,amssymb,amsmath,mathtools}
\usepackage{braket}

\usetikzlibrary{arrows,automata,calc,positioning,shapes}
\tikzstyle{cfg}=[shorten >=1pt,node distance=1.2cm,on grid,>=stealth',initial text=,auto,every state/.style={draw=blue!50,fill=blue!20,thick,minimum size=1em, inner sep=2.5pt},accepting/.style={double distance=1.5pt,outer sep=0.75pt+\pgflinewidth}]

\newcommand{\x}{\ensuremath{\mathbf{x}}}
\newcommand{\y}{\ensuremath{\mathbf{y}}}
\newcommand{\z}{\ensuremath{\mathbf{z}}}
\newcommand{\X}{\ensuremath{\mathbf{X}}}
\newcommand{\C}{\ensuremath{\mathcal{C}}}
\newcommand{\T}{\ensuremath{\mathcal{T}}}
\newcommand{\True}{\ensuremath{\top}}
\newcommand{\False}{\ensuremath{\bot}}

\newcommand{\sem}[1]{\llbracket#1\rrbracket}
\newcommand{\semT}[1]{\sem{#1}^\intercal}
\newcommand{\semR}[1]{[#1]^\intercal}
\newcommand{\Let}{\coloneqq}
\newcommand{\SLPassign}{\rightarrow}
\newcommand{\impliesSharp}[1][]{\ensuremath{\DOTSB\;\Longrightarrow^\sharp\;}}
\newcommand{\length}[1]{\lVert#1\rVert}
\newcommand{\balance}[1]{\lvert#1\rvert}

\newcommand{\sizeT}[1]{\textsf{size}(#1)}
\newcommand{\hash}{\raisebox{.7pt}{\scalebox{.8}{\#}}}
\newcommand{\cupplus}{\uplus}

\theoremstyle{plain}
\newtheorem{theorem}[thm]{Theorem}
\newtheorem{corollary}[thm]{Corollary}
\newtheorem{lemma}[thm]{Lemma}

\theoremstyle{definition}
\newtheorem{example}[thm]{Example}

\begin{document}
\title{Inter-procedural Two-Variable Herbrand Equalities}
\author[S. Schulze Frielinghaus]{Stefan Schulze Frielinghaus}
\address{Technische Universit\"at M\"unchen, Boltzmannstrasse 3, 85748 Garching, Germany}
\author[M. Petter]{Michael Petter}
\address{\vspace{-18 pt}}
\author[H. Seidl]{Helmut Seidl}
\address{\vspace{-18 pt}}
\email{\{schulzef, seidl\}@in.tum.de, petter@cs.tum.edu}
\begin{abstract}
We prove that all valid Herbrand equalities can be inter-procedurally inferred
for programs where all assignments whose right-hand sides depend on at most one variable are taken into account.
The analysis is based on procedure summaries representing the weakest pre-conditions for
finitely many generic post-conditions with template variables.
In order to arrive at effective representations for all occurring weakest pre-conditions,
we show for almost all values possibly computed at run-time, that they can be uniquely
factorized into tree patterns and a ground term.
Moreover, we introduce an approximate notion of subsumption which is effectively decidable
and ensures that finite conjunctions of equalities may not grow infinitely.
Based on these technical results, we realize an effective fixpoint iteration to infer
all inter-procedurally valid Herbrand equalities for these programs.
Finally we show that an invariant candidate with a constant number of variables, can be verified in polynomial time.
 \end{abstract}

\maketitle

\noindent
How can we infer that an equality such as $\x\doteq \y$ holds at some program point,
if the operators by which the program variables $\x$ and $\y$ are computed, do not satisfy obvious
algebraic laws? This is the case, e.g., when either very high-level operations such as {\sf sqrt}, or
very low-level operations such as bit-shift are involved or, generally,
for floating-point calculations.
Still, the equality $\x\doteq \y$ can be inferred, if $\x$ and $\y$ are computed by means of
\emph{syntactically} identical terms of operator applications. The equality then is called \emph{Herbrand} equality.
The problem of inferring valid Herbrand equalities dates back to~\cite{Cocke70} where it
was introduced as the famous \emph{value numbering} problem. Since quite a while, algorithms are known which,
in absence of procedures, infer \emph{all} valid Herbrand equalities~\cite{Kildall73,Knoop90}. These algorithms can
even be tuned to run in polynomial time, if only invariants of polynomial size are of interest~\cite{Gulwani04}.
Surprisingly, little is known about Herbrand equalities if recursive procedure calls are
allowed. In~\cite{DBLP:conf/esop/Muller-OlmSS05} it has been observed that the intra-procedural techniques can be extended
to programs with local variables and \emph{functions} --- but without global variables.
The ideas there are strong enough to generally infer all Herbrand \emph{constants}
in programs with procedures and both local and global variables, i.e.,
invariants of the form $\x \doteq t$ where $t$ is ground.
Another tractable case of invariants is obtained if only assignments are taken into account whose right-hand sides
have at most \emph{one occurrence} of a variable~\cite{Petter10Thesis}.
Thus, assignment $\x \Let f(\y,a)$ is considered while
assignments such as $\x \Let f(\y,\y)$ or $\x \Let f(\y,\mathbf{z})$ are
approximated with $\x \Let\null ?$, i.e., by an assignment of an unknown value to $\x$.
The idea is to encode ground terms as numbers. Then Herbrand equalities can be represented as polynomial equalities
with a fixed number of variables and of bounded degree.
Accordingly, techniques from linear algebra are sufficient to infer all
valid Herbrand equalities for such programs.
As a special case, Petter's class of programs from~\cite{Petter10Thesis} subsumes those programs where only \emph{unary} operators
are involved.
Such programs have been considered by~\cite{Gulwani07}.
Interestingly, the latter paper arrives at decidability by a completely different line of argument,
namely, by exploiting properties of the free monoid generated from the unary
operators.
Another avenue to decidability is to restrict the control structure of programs to be analyzed.
In~\cite{Tiwari09}, the restricted class of \emph{Sloopy} Programs is introduced where the format of
loop as well as recursion is drastically restricted.
For this class an algorithm is not only provided to decide arbitrary equalities between variables but
also disequalities.

On the other hand, when only affine numerical expressions as well as affine program invariants are of concern,
the set of valid invariants at a program point form a \emph{vector space} which can be effectively represented.
This observation is exploited in~\cite{Muller-Olm04Precise} to apply methods from linear algebra to infer
all valid affine program invariants.
These methods later have been adapted to the case where values of variables are not from a field,
but where integers will overflow at some power of 2, i.e., are taken from a modular ring. Note that
in the latter structure, some number different from 0 may be a zero divisor and thus does not have a
multiplicative inverse~\cite{Muller-Olm07Analysis}.
For some applications, an analysis of \emph{general} equalities is not necessary. In applications such as
coalescing of registers~\cite{MMOSeidlAdjointESOP08} or detection of local variables in low-level code~\cite{FlexederMPS11},
it suffices to infer equalities involving two variables only.
In the affine case, algorithms for inferring all two-variable equalities can be constructed which
have better complexities than the corresponding algorithms for general equalities~\cite{FlexederMPS11}.

The question whether or not \emph{all} inter-procedurally valid Herbrand equalities
can be inferred, is still open. Here, we consider the case of Herbrand equalities containing two variables only.
These are equalities such as $\x \doteq f(g(\y),\y,a)$, i.e., right-hand sides of equalities may contain
only a single variable, but this multiple times.
Accordingly, in programs only assignments are taken into account whose
right-hand sides contain (arbitrarily many) occurrences of at most one variable.
Our main result is that under this provision,
\emph{all} inter-procedurally valid two-variable Herbrand equalities can be inferred.

Our novel analysis is based on calculating weakest
pre-conditions for all occurring post-conditions.
Since there may be infinitely many potential post-conditions for a called procedure,
we rely on \emph{generic} post-conditions to obtain finite representations of procedure summaries.
In a generic post-condition
\emph{second-order} variables are used as place-holders for yet unknown relationships between program variables.
In the generic post-condition
\[
A(\x)\doteq B(\y)
\]
the second-order variables $A$ and $B$ take as values terms with (possibly multiple occurrences of) \emph{holes} (which we call \emph{templates}).
As pre-conditions we then get conjunctions of the following form
\[
\textstyle\bigwedge_i A(s_i) \doteq B(t_i)
\]
where each term $s_i,t_i$ contains at most one variable which might occur multiple times.
To realize our algorithm for inferring all inter-procedurally valid two-variable equalities, we thus require
\begin{itemize}
\item a method to finitely represent all occurring conjunctions of equalities,
\item a method for proving that one conjunction subsumes another conjunction, i.e., a method to detect when
the greatest fixpoint computation has terminated;
\item a guarantee that a fixpoint will be reached in finitely many steps.
\end{itemize}
Note here that the equalities occurring during the weakest pre-condition computation of a generic
post-condition may contain occurrences of second-order variables.
Thus, subsumption between conjunctions of equalities is subtly related
to second-order unification~\cite{GoldfarbTCS81}. Second-order unification asks whether a conjunction
of equalities possibly containing second-order variables is satisfiable.
Since long, it is known that generally, second-order unification is undecidable.
Undecidability of second-order unification even holds if only a single unary second-order variable
is involved~\cite{Levy00}.
In contrast, the problem of \emph{context} unification, i.e., the variant of
second-order unification where
second-order variables range over terms with single occurrences of holes only,
has recently been proven to be decidable~\cite{Jez14}.
It is worth mentioning that neither of the two cases directly applies to our application,
since we consider unary second-order variables
(as context unification) but let variables range over terms with one or multiple occurrences of holes
(differently from context unification).
To the best of our knowledge, decidability of satisfiability is still open for our case.
\begin{example}
In our case, during the {\bf WP} computation a conjunction of the following form might occur:
\[
A(a) \doteq B(f(a,a)) \land A(b) \doteq B(f(b,b))
\]
where $a$ and $b$ are atoms.
The (unique) solution for the second-order variables $A$ and $B$ is then given as
\[
A = B(f(\bullet,\bullet))
\]
where $\bullet$ denotes the hole.
Since the hole occurs two times in the solution, the conjunction is not satisfiable,
if only context unification is considered.
\qed
\end{example}

In this paper, we will not solve the satisfiability problem for the given unification problem.
Instead, we introduce two novel ideas to circumvent this problem and still
infer all inter-procedurally valid two-variable Herbrand equalities.
First, we introduce a notion of \emph{approximate} subsumption. This means that our algorithm does not allow to prove
implications between all conjunctions of equalities --- but at least sufficiently many so that accumulation
of \emph{infinite} conjunctions is ruled out.
Second, we note that subsumption is not required for arbitrary valuations of program variables.
Instead it suffices to consider values which may possibly be constructed by the program at run-time.
For programs where every right-hand side of assignments contain occurrences of single variables only,
we observe that the ground terms possibly occurring at run-time, have a specific structure, which
allows for a \emph{unique factorization} of these terms into irreducible templates ---
at least, if these ground terms are sufficiently \emph{large}.
Our factorization result applied to these kind of values, enables us to
make use of the monoidal methods of~\cite{Gulwani07}.
This approach, which works for sufficiently large terms, then is complemented with
a dedicated treatment of finitely many exceptional cases.
By that, we ultimately succeed to construct an effective approximative subsumption
algorithm which allows us to restrict the number of equalities in occurring conjunctions
and to determine all valid two-variable Herbrand equalities.

In order to arrive at our key result, namely an algorithm to infer
all valid inter-procedural two-variable Herbrand equalities,
we thus build on the following two novel technical constructions:
\begin{itemize}
\item a method to uniquely factorize the kind of values possibly occurring at run-time (except finitely many) of a given program;
\item a notion of approximative subsumption which is decidable and still
guarantees that every occurring conjunction of equalities is effectively equivalent to a finite
conjunction.
\end{itemize}
Subsequently, we sketch how not only all two-variable equalities, but \emph{all} inter-procedurally valid
Herbrand equalities can be inferred, if only all right-hand sides in assignments each contain occurrences of at most
one variable.

Finally we show that the complexity of inferring all valid two-variable Herbrand equalities
in \emph{initialization-restricted} programs is polynomial and that for \emph{unrestricted} programs, at least
verifying a given equality can be performed in polynomial time. This is remarkable in so far as the terms
encountered during the {\bf WP} computation may be exponentially deep. In order to obtain a polynomial time analysis,
we therefore follow the ideas
sketched in~\cite{Gulwani07} and provide \emph{compressed} representations for the occurring terms
which support all basic term operations
in polynomial time. Subsequently, we show that our notion of approximative subsumption is decidable in polynomial time.
Furthermore, for the multi-variable case, we show that verifying an invariant candidate
is polynomial as well (given that the number of occurrences of variables in the post-condition is bounded).

Parts of this paper have been published at the ESOP conference in 2015~\cite{ESOP2015}.
For the journal version, we have provided the following additions:
\begin{itemize}
  \item an efficient implementation of the analysis by means of compressed representations of invariants;
  \item an extended program model which supports not only global but also local variables;
  \item an explicit proof of approximative $T$-subsumption and $T$-compactness.
\end{itemize}

Our paper is organized as follows. Section~\ref{s:programs} briefly introduces our programming model.
Section~\ref{s:wp} presents our basic $\mathbf{WP}$ based approach of inferring all valid program invariants.
In Section~\ref{s:basics}, we provide general background on the cancellation and factorization properties of
terms and prove a first compactness result for equalities with template variables but no occurrences of program variables.
Additionally, in Section~\ref{s:monoid} we recapitulate equalities over a free monoid.
In Section~\ref{s:restricted} we then provide an algorithm for inferring all two-variable equalities ---
at least, for programs which are \emph{initialization-restricted}
(see Section~\ref{s:restricted} for a precise definition of this restriction).
Technically, this restriction implies that all occurring terms can be uniquely factorized into irreducible terms.
In order to arrive at an algorithm for programs which are not initialization-restricted,
we complement this approach in Section~\ref{s:general}
with a dedicated treatment of values where a unique factorization is not possible.
Section~\ref{s:addon} indicates how our methods can be extended to general Herbrand equalities.
Finally, in Section~\ref{s:polytime} we examine the complexity of our analysis.
We introduce the compressed representation of terms used by the implementation
and indicate how the required operations can be efficiently realized.
There, we first consider two-variable Herbrand equalities only and afterwards also generalize the method to multi-variable
Herbrand equalities.

 \section{Programs}\label{s:programs}

For the purpose of this paper, we consider imperative programs which consist of a finite set $P$ of procedures such as:
\[
\begin{array}{@{}>{\scriptstyle}r@{\hspace{1em}}l@{\hspace{2cm}}>{\scriptstyle}r@{\hspace{1em}}l@{}}
        0:&\textsf{global}\;\x,\y;	&&	\\
        1:& \textit{main}()\;\{        &6:& p()\;\{ \\
        2:& \qquad      \x \Let a;     &7:& \qquad\textsf{if}\;(*)\;\{ \\
        3:& \qquad      \y \Let a;     &8:& \qquad\qquad\x \Let f(\x,\x); \\
        4:& \qquad      p();           &9:& \qquad\qquad p(); \\
        5:& \}                        &10:& \qquad\qquad \y \Let f(\y,\y); \\
          &                           &11:& \qquad\} \\
          &                           &12:& \}
\end{array}
\]
Instead of operating on the syntax of programs, we prefer to represent each procedure by a
(non-deterministic) control flow graph. Figure~\ref{f:prog} shows, e.g., the control flow graphs for the given example
program.
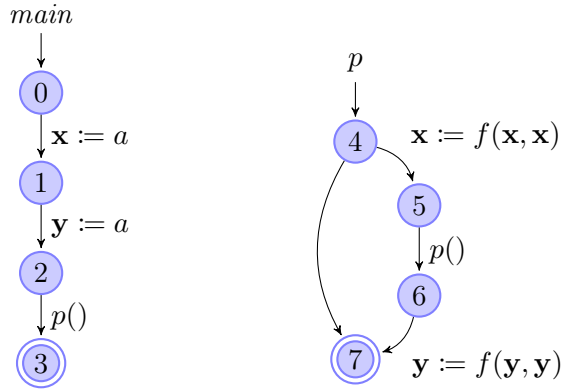
\begin{figure}[htp]
\centering
\begin{tikzpicture}[cfg]
\node[state,initial above,initial text=\textit{main}] (0) {0};
\node[state,below of=0] (1) {1};
\node[state,below of=1] (2) {2};
\node[state,below of=2,accepting] (3) {3};
\path[->] (0) edge node {$\x \Let a$} (1)
          (1) edge node {$\y \Let a$} (2)
          (2) edge node {$p()$} (3);
\end{tikzpicture}
\hspace{2cm}
\begin{tikzpicture}[cfg]
\node[state,initial above,initial text=$p$] (4) {4};
\node[state,below right of=4] (5) {5};
\node[state,below of=5] (6) {6};
\node[state,below left of=6,accepting] (7) {7};
\path[->] (4) edge[bend left] node {$\x \Let f(\x,\x)$} (5)
          (4) edge[bend right] (7)
          (5) edge node {$p()$} (6)
          (6) edge[bend left] node {$\y \Let f(\y,\y)$} (7);
\end{tikzpicture}
\caption{\label{f:prog}The corresponding CFGs for the example program.}
\end{figure}
Formally, the control flow graph for a procedure $p$ consists of:
\begin{itemize}
\item	A finite set $N_p$ of program points where $s_p,r_p\in N_p$ represent the start and return point of the
	procedure $p$;
\item	A finite set $E_p$ of edges $(u,s,v)$ where $u,v\in N_p$ are program points and $s$ denotes a basic statement.
\end{itemize}
For simplicity, we proceed in the style of Sharir/Pnueli in~\cite{SharirP81} and
consider parameterless procedures which operate on global variables only.
In the following, $\X$ denotes the finite set of program variables.
As \emph{values}, we consider uninterpreted operator expressions only.
Thus, values are constructed from atomic values by means of (uninterpreted) operator applications.
Let $\Omega$ denote a finite signature containing a non-empty set of atomic values $\Omega_0$ and sets
$\Omega_k$,$k>0$, of constructors of rank $k$. Then $\T_\Omega$ denotes the set of all possible (ground) terms over
$\Omega$, and $\T_\Omega(\X)$ the set of all possible terms over $\Omega$ and (possibly) occurrences of program
variables from $\X$.
In general, we will omit brackets around the argument of unary symbols.
Thus, we may, e.g., write $h\x$ instead of $h(\x)$.

As basic statements, we only consider assignments and procedure calls.
An assignment $\x \Let\null ?$ non-deterministically assigns \emph{any} value to the program variable $\x$, whereas
an assignment $\x \Let t$ assigns the value constructed according to the right-hand side term $t\in \T_\Omega(\X)$.
A procedure call is of the form $p()$ for a procedure name $p$.

In this paper, we only consider assignments whose right-hand sides contain occurrences of at most one variable.
The assignments occurring in the example program from Figure~\ref{f:prog} have this property.
Note that this program does not fall into Petter's class,
since the right-hand sides of assignments contain more than one occurrence of a variable.
In \emph{general} programs with arbitrary assignments, the
assignments with right-hand sides not conforming to the given restriction may, e.g.,
be abstracted by the non-deterministic assignment of \emph{any} value.

 \section{Computing Weakest Pre-conditions}\label{s:wp}

Our goal is to prove for a given assertion whether it is valid at a given program point or, better, to
infer all invariants which are valid at that point.
For that, we would like to calculate weakest pre-conditions of assertions, or, more generally, to
determine for every program point the minimal assumptions to be met
for the queried assertion to hold at the given program point.
Since the program model makes use of non-deterministic branching, we may assume
w.l.o.g.\ that every program point is \emph{reachable}.
In particular, this implies that no procedure is definitely non-terminating, i.e.,
that for every procedure $p$,
there is at least one execution path from the start point of $p$ reaching the end point of $p$.

\begin{example}\label{e:prog}
Consider the program from Figure~\ref{f:prog}.
At program exit, the invariant $\x \doteq \y$ holds.
In a proof of this fact by means of a $\mathbf{WP}$ computation,
weakest pre-conditions must be provided for procedure $p$ and all
assertions $\x \doteq t_k$, $k\geq 0$, where
$t_0 = \y$ and for $k>0$, $t_k = f(t_{k-1},t_{k-1})$.
This set of post-conditions is not only infinite,
but also makes use of an ever increasing number of variable occurrences.
Thus, an immediate encoding, e.g., into bounded degree polynomials as in
\cite{Petter10Thesis} is not obvious.
\qed
\end{example}

\noindent
In order to summarize the effect of a procedure for multiple but similar post-conditions,
we tabulate the weakest pre-conditions for \emph{generic} post-conditions only.
Generic post-conditions are assertions which contain \emph{template variables}
which later may be instantiated differently in different contexts for arriving post-conditions.
This idea has been applied, e.g., for affine equalities~\cite{Muller-Olm04Precise,MMOSeidlAdjointESOP08,FlexederMPS11},
for polynomial equalities~\cite{MMOSeidlPetterSTACS06,Petter10Thesis},
or for Herbrand equalities with unary operators~\cite{Gulwani07}.
The generic post-conditions which are of interest here, are of the forms
\[
A\x \doteq C \quad\text{or}\quad A\x \doteq B\y
\]
where $\x,\y$ are program variables, the \emph{ground} template variable $C$ is meant to receive a constant value,
and the template variables $A,B$ take \emph{templates} as values, i.e.,
terms over the ranked alphabet $\Omega$ and having at least one occurrence
of the (fresh) place holder variable $\bullet$.
Computing weakest pre-conditions operates on assertions where an assertion
is a (possibly infinite) conjunction of equalities.
The equalities occurring during weakest pre-condition calculations
are of the forms:
\[
As \doteq C \quad\text{or}\quad As \doteq Bt
\]
where $s,t$ are terms possibly containing a program variable, i.e., $s,t \in \T_\Omega(\X)$.

Consider a mapping $\sigma$ which assigns \emph{appropriate values} to the program variables from $\X$
as well as to the (non-ground or ground) template variables $A,B,C$.
This means that $\sigma$ assigns ground terms to the variables in $\X\cup\{C\}$ and templates to
$A,B$. Such a mapping is called \emph{variable assignment}.
The variable assignment $\sigma$ \emph{satisfies} the equality $s\doteq t$ ($\sigma\models(s\doteq t)$ for short)
iff $\sigma^*(s) = \sigma^*(t)$
where $\sigma^*$ is the natural tree homomorphism corresponding to $\sigma$, which is the identity on
all operators in $\Omega$.
The homomorphism $\sigma^*$ maps, e.g., the application $At$ of the template variable $A$ to the term $t$ into
$\sigma(A)[\sigma^*(t)/\bullet]$, i.e., the \emph{substitution} of the term $\sigma^*(t)$ into
the occurrences of the dedicated variable $\bullet$ in the template $\sigma(A)$.
Substitution into the dedicated variable $\bullet$ is an associative binary operation where the neutral
element is the template consisting of $\bullet$ alone. In the following, we denote this operation by juxtaposition.

Consider, e.g., an assignment $\sigma$ with $\sigma(A) = h(\bullet,\bullet)$, and $\sigma(B) = \bullet$, and
$\sigma(\x)=a$. Then
\[
\sigma^*(A\x)= h(\bullet,\bullet)\,a = h(a,a)=\bullet\,h(a,a) = \sigma^*(B h(\x,a))
\]
holds. Therefore, $\sigma$ satisfies the equality $A\x\doteq B h(\x,a)$.
In the following, we will no longer distinguish between $\sigma$ and $\sigma^*$.

The variable assignment $\sigma$ satisfies the conjunction $\phi$ of equalities
($\sigma\models\phi$ for short), iff $\sigma\models e$ for all equalities $e\in\phi$.

In our application, it will be convenient not to consider arbitrary variable assignments,
but only those which map program variables to \emph{reasonable} values as shown in the following.
For a subset $T\subseteq \T_\Omega$ of ground terms, we call a variable assignment $\sigma$ a
$T$-assignment, if $\sigma$ maps program variables $\x$ to values $\sigma(\x)\in T$ only.

The conjunction $\phi$ then is called \emph{$T$-satisfiable} if there is some $T$-assignment $\sigma$ with
$\sigma\models\phi$. Otherwise, it is \emph{$T$-unsatisfiable}.
Conjunctions $\phi,\phi'$ are \emph{$T$-equivalent} if for every $T$-assignment $\sigma$,
$\sigma\models\phi$ iff $\sigma\models\phi'$.
Obviously, an empty conjunction is satisfied by every variable assignment and therefore equal to $\True$ (true),
while all $T$-unsatisfiable conjunctions are $T$-equivalent. As usual, these are denoted by $\False$ (false).
Finally, a conjunction $\phi'$ is \emph{$T$-subsumed} by a conjunction $\phi$, if $\phi$ is $T$-equivalent to $\phi\land\phi'$.

If the set $T$ by which we have relativized the notions of satisfiability, equivalence and subsumption
equals the full set $\T_\Omega$, we may also drop the prefixing with $T$. In particular, we have for any $T$ that
satisfiability, equivalence and subsumption imply $T$-satisfiability, $T$-equivalence and $T$-subsumption, while the reverse implication may not necessarily hold.

In the following, we recall the ingredients of weakest pre-condition computation for assignments as well as
for procedure calls as provided, e.g.\ in~\cite{HoareC69} or~\cite{CousotMLPP90}.
The weakest pre-conditions of $\phi$ w.r.t.\ assignments are given by:
\[
\begin{array}{lcl}
\semT{\x \Let t}\;\phi 	&=&	\phi[t/\x]	 \\[1ex]
\semT{\x \Let\null ?}\;\phi         &=&     \forall\,\x.\;\phi	\\
\end{array}
\]
Thus, the weakest pre-condition for an assignment $\x \Let t$ is given by substitution of the term $t$ into
all occurrences of the variable $\x$ in the post-conditions, while
the weakest pre-condition for a non-deterministic assignment $\x \Let\null ?$ of any value is given
by universal quantification.
For Herbrand equalities, universal quantification can be computed as follows.
Recall that universal quantification commutes with conjunction. Therefore, it suffices to consider single equalities $e$.
If $\x$ does not occur in $e$, then $\forall\,\x.\,e$ is equivalent to $e$.
If $\x$ occurs only on one side of $e$, then $\forall\,\x.\,e = \False$.
Now assume that $\x$ occurs on both sides of $e$.
If $e$ is of the form $s\x \doteq t\x$ for templates $s,t$ (no template variables),
then either $s = t$ and hence $e$ as well as $\forall\,\x.\,e$ is equivalent to $\True$,
or $s\neq t$, in which case $\forall\,\x.\,e$ equals $\False$.
If $e$ is of the form $As\x \doteq Bt\x$ for templates $s,t$, then $\forall\,\x.\,e$ is equivalent to
$As \doteq Bt$.

Every transformation $f$ which is specified for generic post-conditions
to conjunctions of pre-conditions, can be uniquely
extended to a transformation $\bar f$ of \emph{arbitrary} post-conditions by
\[
\textstyle
\bar f(\bigwedge E) \quad=\quad \bigwedge_{e\in E} \bar f(e)
\]
where the transformation $\bar f$ for an arbitrary equality $e$ is defined as follows:
\[
\bar f(s \doteq t) =
\begin{cases}
f(A\x \doteq B\y)[s'/A,t'/B]  &\text{if $s = s'\x$, $t = t'\y$} \\
f(A\x \doteq C)[s'/A,t/C]     &\text{if $s = s'\x$, $t$ ground}	\\
f(A\x \doteq C)[s/C,t'/A]     &\text{if $t = t'\x$, $s$ ground}   \\
s \doteq t                    &\text{otherwise}
\end{cases}
\]
Subsequently, the extended function $\bar f$ is denoted by $f$ as well.
The procedure summaries are then characterized by the
constraint system {\bf S}:
\[\begin{array}{@{}lll@{\qquad}l@{}}
\semT{r_p} &\implies& \textsf{Id} & \text{for each procedure $p$}\\[1ex]
\semT{u} &\implies& \semT{s_p} \circ \semT{v} & \text{for each $(u,p(),v) \in E$}\\[1ex]
\semT{u} &\implies& \semT{s} \circ \semT{v} & \text{for each $(u,s,v) \in E$,}\\
         && & \text{$s$ assignment}
\end{array}\]
where $\circ$ means the composition of the weakest pre-condition transformers and \textsf{Id}
is the identity transformer.
Thus, accumulation of weakest pre-conditions for a generic post-condition $e$ at procedure exit $r_p$ with $e$
and then propagates its pre-conditions backward to the start point of $p$ by applying the transformations corresponding
to the traversed edges.
Here, the subsumption relation $\implies$ as defined for conjunction of equalities,
has silently been raised to the function level. Thus,
$f\implies g$ if $f(e)$ subsumes $g(e)$ for all generic post-conditions $e$.

W.r.t.\ the ordering $\sqsubseteq$ given by $\implies$, the {\bf WP} transformer
of procedure $p$ then is obtained as the value for the variable corresponding to
the start point $s_p$ in the \emph{greatest} solution to the constraint system {\bf S}.

The {\bf WP} transformers for all program points are characterized
by the greatest solution of the constraint system {\bf R}:
\[\begin{array}{@{}lll@{\qquad}l@{}}
\semR{s_\textit{main}} &\implies& \textsf{Id} \\[1ex]
\semR{s_p} &\implies& \semR{u} & \text{for each $(u,p(),\_) \in E$}\\[1ex]
\semR{v} &\implies& \semR{u} \circ \semT{s_p} & \text{for each $(u,p(),v) \in E$}\\[1ex]
\semR{v} &\implies& \semR{u} \circ \semT{s} & \text{for each $(u,s,v) \in E$,}\\
         && & \text{$s$ assignment}
\end{array}\]
The value for $\semR{v}$ for program point $v$ is meant to transform every assertion at
program point $v$, into the corresponding weakest pre-condition at the start point of the program.
Note that the constraint system for characterizing these functions makes use of the weakest pre-condition
transformers of procedures as characterized by the constraint system {\bf S}.

Assume that we are somehow given the greatest solution of the constraint system {\bf R} where $\semR{v}$ is
the corresponding transformation for program point $v$.
In order to determine all one- or two-variable equalities which are valid when reaching the program point $v$,
we conceptually proceed as follows:
\begin{description}
\item[One-variable Equality.]
	For a program variable $\x$,
	let $\psi$ denote the \emph{universal closure} of $\semR{v}(A\x \doteq C)$.
	If $\psi = \False$, then program variable $\x$ does not receive a constant
	value at program point $v$. Otherwise $\psi$ is equivalent to an equality $As \doteq C$
	where $s$ is ground, i.e., $\x \doteq s$ is an invariant at $v$.
\item[Two-variable Equality.]
	For distinct program variables $\x$ and $\y$,
	let $\psi$ denote the universal closure of $\semR{v}(A\x \doteq B\y)$.
	If $\psi = \False$, then no equality between $\x$ and $\y$ holds.
	Otherwise, $\psi$ equals a conjunction $\bigwedge_i As_i \doteq Bt_i$ of
	equalities where for each $i$ either $s_i,t_i \in \T_\Omega$ are ground
	or $s_i,t_i \in \T_\Omega(\bullet) \setminus \T_\Omega$ are templates.
	Then $r_1 \x \doteq r_2 \y$ is an invariant at $v$
	iff $r_1s_i = r_2t_i$ for all $i$, i.e.,
	any assignment $\sigma$ with $\sigma(A)= r_1,\sigma(B)=r_2$ satisfies the conjunction.
\end{description}
Here, the \emph{universal closure} of a conjunction $\phi$ is given by
$
\forall\,\x_1\ldots\forall\,\x_n.\phi
$,
if the set of program variables equals $\X=\{\x_1,\dotsc,\x_n\}$.
\begin{example}\label{e:prog1}
Consider the main procedure of the program in Section~\ref{s:programs},
as defined by the control flow graph in Figure~\ref{f:prog}.
The {\bf WP} transformer $\semR{3}$ for the endpoint $3$ of the main program
is given by:
\[
\semR{3} = \semT{\x \Let a}\circ\semT{\y \Let a}\circ\semT{4}
\]
where $4$ is the entry point of the procedure $p$.
Assume that
\[
\semT{4}(A\x\doteq B\y) = (A\x\doteq B\y) \land (Af(\x,\x)\doteq Bf(\y,\y))
\]
holds. For the program variables $\x,\y$, we therefore obtain:
\[
\begin{array}{lll}
\semR{3}(A\x\doteq B\y) &=& (A\x\doteq B\y)[a/\y][a/\x] \land (Af(\x,\x)\doteq Bf(\y,\y))[a/\y][a/\x]	\\
                        &=& (Aa\doteq Ba) \land (Af(a,a)\doteq Bf(a,a))
\end{array}
\]
This assertion does not contain occurrences of the program variables $\x,\y$.
Therefore, it is preserved by universal quantification over program variables.
Since $A = B = \bullet$ is a solution, $\x\doteq \y$ holds whenever program point $3$ is reached.
\qed
\end{example}
 \noindent
In order to turn these definitions into an effective analysis algorithm, several obstacles must be overcome.
So, it is not clear how general subsumption, as required in our characterization of the {\bf WP} transformers,
can be decided in presence of template variables.
We observe, however, that instead of general subsumption, it suffices to rely on
$T$-subsumption only --- for a well-chosen subset $T\subseteq \T_\Omega$.
Note that the smaller the set $T$ is, the coarser is the subsumption relation.
In particular for $T=\emptyset$, all conjunctions are $T$-equivalent.
Since every assertion expresses a property of reaching program states,
it suffices for our application to choose $T$ as a superset of all run-time values of program variables.

The following wish list collects properties which enable us to construct an
effective inter-procedural analysis of all two-variable Herbrand equalities:
\begin{description}
\item[$T$-Compactness.]
Every occurring conjunction $\phi$ is $T$-subsumed by a conjunction of a \emph{finite} subset
of equalities in $\phi$.
\item[Effectiveness of subsumption.]
$T$-subsumption for \emph{finite} conjunctions can be effectively decided.
\item[Solvability of ground equalities.]
The set of solutions of finite systems of equalities with template variables only, i.e., \emph{without}
occurrences of program variables can be explicitly computed.
\end{description}
By the first assumption,
a standard fixpoint iteration for the constraint systems {\bf S} and {\bf R}
will terminate after finitely many iterations (up to $T$-equivalence).
By the second assumption, termination can effectively be detected, while the third
assumption guarantees that for every program point and every program variable (pair of program variables)
the set of all valid invariants can be extracted out of the greatest solution of {\bf R}.
In total, we arrive at an effective algorithm for inferring all valid two-variable equalities.

The assumption on decidability of $T$-subsumption can be further relaxed.
Instead, we provide an \emph{approximate} notion of $T$-subsumption which is decidable.
Our approximate $T$-subsumption implies $T$-subsumption. Moreover, it
is still strong enough to guarantee that every occurring conjunction of equalities is approximately $T$-subsumed by
a finite subset of the equalities. Notions for approximate $T$-subsumption will be introduced in
Sections~\ref{s:restricted} and~\ref{s:general}.

For programs which operate on global as well as local variables, an extension of our program model and weakest
pre-condition calculus is given in Appendix~\ref{a:locals}. There we introduce a program model which is
general enough in order to model usual concepts of local variables together with call-by-value parameter parsing
and returning of results in dedicated global variables. Furthermore, we extend the weakest pre-condition calculus
in order to deal with generic post-conditions which contain local program variables.

In the upcoming section, we recall basic properties of the set of terms, possibly containing the variable
$\bullet$. These properties will allow us to deal with conjunctions of equalities where
template variables are applied to ground terms only, i.e., the case of ground equalities.

 \section{Factorization of Terms}\label{s:factorization}\label{s:basics}

Let $\T_\Omega (\bullet)$
denote the set of terms constructed from the symbols in $\Omega$, possibly together
with the dedicated variable $\bullet$.
In~\cite{Engelfriet80}, Engelfriet presents the following cancellation and
factorization properties for terms in $\T_\Omega(\bullet)$:
\begin{description}
\item[Bottom Cancellation] \hfill\\
Assume that $t_1 \neq t'_1$. Then $s_1 t_1 = s_2 t_1$ and $s_1 t'_1 = s_2 t'_1$ implies
$s_1 = s_2$.
\item[Top Cancellation] \hfill\\
Assume $\bullet$ occurs in $s$. Then $s t_1 = s t_2$ implies $t_1 = t_2$.
\item[Factorization] \hfill\\
Assume $t_i \neq t'_i$ for $i = 1, 2$. Then $s_1 t_1 = s_2 t_2$ and $s_1 t'_1 = s_2 t'_2$
implies that $s_1 r_1 = s_2 r_2$ for some $r_1,r_2$ each containing $\bullet$ where at least
one of the $r_i$ equals $\bullet$.
In that case (by top cancellation), we furthermore have that both
$r_2 t_1 = r_1 t_2$ and $r_2 t'_1 = r_1 t'_2$.
\end{description}

\noindent
Using these cancellation properties, we obtain a complete method for dealing with
equalities \emph{without} occurrences of program variables.

For one-variable equalities alone, we have the following results concerning subsumption and compactness:
\begin{theorem}\label{t:onevar}~
\begin{enumerate}
\item   A single equality $A s \doteq C$ for some ground term $s$ has exactly one solution where $A=\bullet$.
\item   Consider the conjunction
        $ As_1 \doteq C \land As_2 \doteq C  $
        for terms $s_1\neq s_2$ containing the same variable $\x$.
	If the conjunction is satisfiable, then the value of $\x$ is uniquely determined.
\end{enumerate}
\end{theorem}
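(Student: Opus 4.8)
The plan is to reduce both parts to Engelfriet's cancellation properties, which do the real work; the only genuine modelling step is to notice that a term built over a single program variable $\x$ becomes, after relabelling $\x$ to the hole $\bullet$, an element of $\T_\Omega(\bullet)$, so that the juxtaposition/substitution operation and the cancellation lemmas apply to it. For part~1 I would just evaluate the candidate solution: since $\bullet$ is the neutral element of juxtaposition, the choice $\sigma(A)=\bullet$ turns $As\doteq C$ into $s\doteq C$, which forces $\sigma(C)=s$ and nothing else; as $s$ is ground this is a legal value for the ground template variable $C$. Hence there is exactly one solution with $A=\bullet$, namely $\sigma(A)=\bullet$, $\sigma(C)=s$. (Every other template value $P$ for $A$ gives an equally valid but different solution with $\sigma(C)=P[s/\bullet]$, which is precisely why the restriction $A=\bullet$ is what singles out the canonical solution from which one reads off the invariant $\x\doteq s$.)

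For part~2 — the substantive claim — I would first exploit that both equalities share the same $A$ and the same $C$. Let $\sigma$ be any satisfying variable assignment and abbreviate $u=\sigma(\x)$. Since $s_1,s_2$ contain only the variable $\x$, we have $\sigma(s_i)=s_i[u/\x]$, a ground term, and the two equalities give $\sigma(A)\,\sigma(s_1)=\sigma(C)=\sigma(A)\,\sigma(s_2)$. As $\sigma(A)$ is a template it contains $\bullet$, so Top Cancellation applied to $\sigma(A)$ yields $\sigma(s_1)=\sigma(s_2)$, i.e. $s_1[u/\x]=s_2[u/\x]$. Thus every satisfying assignment makes $s_1$ and $s_2$ agree after substituting its own value of $\x$.

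It then remains to show that at most one $u$ can achieve this. Relabelling the common variable $\x$ to $\bullet$ turns $s_1,s_2$ into templates $\hat s_1,\hat s_2\in\T_\Omega(\bullet)$; relabelling is injective, so $s_1\neq s_2$ gives $\hat s_1\neq\hat s_2$, and $s_i[u/\x]=\hat s_i\,u$. Suppose, for contradiction, two satisfying assignments produced distinct ground values $u_1\neq u_2$. Then $\hat s_1\,u_1=\hat s_2\,u_1$ and $\hat s_1\,u_2=\hat s_2\,u_2$, so Bottom Cancellation (with the distinct arguments $u_1,u_2$) forces $\hat s_1=\hat s_2$, contradicting $\hat s_1\neq\hat s_2$. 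Hence all satisfying assignments agree on $\x$, and since the conjunction is assumed satisfiable the value of $\x$ is uniquely determined.

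The one place needing care, which I would flag as the main (small) obstacle, is the passage from the semantic statement about assignments to the purely syntactic cancellation lemmas: one must verify that their hypotheses genuinely hold — that $\sigma(A)$ really contains $\bullet$ (guaranteed because template variables range over templates with at least one hole), that relabelling $\x$ to $\bullet$ preserves both the distinctness of $s_1,s_2$ and the meaning of substitution, and that the two witnessing values $u_1,u_2$ are distinct ground terms so that Bottom Cancellation is applicable. Beyond these checks no induction on term structure is required.
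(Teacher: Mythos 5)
Your proof is correct and takes essentially the same route as the paper: the paper's (very terse) proof also handles part~2 by Top Cancellation, reducing the conjunction to $As_1 \doteq C \land s_1 \doteq s_2$ and then asserting the claim follows, where the implicit final step — that $s_1 \doteq s_2$ with $s_1\neq s_2$ admits at most one value for $\x$ — is precisely the Bottom Cancellation argument you spell out. The paper omits part~1 as immediate, and your evaluation argument for it is fine.
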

\begin{proof}
We only prove the second assertion.
The conjunction $As_1 \doteq C \land As_2\doteq C$ is equivalent to the conjunction
$As_1 \doteq C \land s_1 \doteq s_2$. The most general unifier of $s_1,s_2$
maps $\x$ to a ground subterm of $s_1,s_2$ if the conjunction is satisfiable.
\end{proof}
As a consequence, we obtain:

\begin{corollary}\label{c:onevar}
Consider finite conjunctions of equalities of the form $As\doteq C$.
\begin{enumerate}
\item	Subsumption for these is decidable.
\item	Every satisfiable conjunction is equivalent to a conjunction of at most
	$n+1$ equalities where $n$ is the number of program variables.
\end{enumerate}
\end{corollary}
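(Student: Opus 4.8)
The plan is to reduce both claims to an ordinary Herbrand unification problem over the program variables by first stripping away the template variables $A$ and $C$ using top cancellation. Throughout I use that any value assigned to $A$ is a \emph{template} and hence contains at least one occurrence of $\bullet$, so top cancellation applies to it.

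First I would normalize an arbitrary finite conjunction $\phi = \bigwedge_{i=1}^{m} As_i \doteq C$. Under any satisfying assignment $\sigma$ we have $\sigma(A)[\sigma(s_i)] = \sigma(C) = \sigma(A)[\sigma(s_1)]$, so top cancellation yields $\sigma(s_i) = \sigma(s_1)$ for every $i$; conversely $As_1 \doteq C$ together with $s_i \doteq s_1$ gives back $As_i \doteq C$. This establishes the equivalence
\[
\phi \;\equiv\; (As_1 \doteq C) \wedge \bigwedge_{i=2}^{m} (s_1 \doteq s_i),
\]
in which the second conjunct is a system of equalities over program variables only, free of template variables. For the second claim of the corollary, I would then observe that $\bigwedge_{i\ge 2}(s_1 \doteq s_i)$ is a purely syntactic unification problem over the $n$ program variables in $\X$. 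If it is unsolvable then $\phi = \False$; otherwise it admits a most general unifier presentable in solved form as a conjunction of at most $n$ equalities, one per variable in its domain. Conjoining the single surviving equality $As_1 \doteq C$ produces a conjunction of at most $n+1$ equalities equivalent to $\phi$, which is exactly the stated bound. Theorem~\ref{t:onevar} is recovered as the special case $m=2$ with $s_1,s_2$ sharing a single variable, where the unifier pins that variable to a unique ground term.

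For the first claim, I would decide whether $\phi$ subsumes $\phi' = \bigwedge_k As'_k \doteq C$ by testing $\phi \models As'_k \doteq C$ for each $k$. Writing $\phi$ in the normalized form above and letting $\theta$ be the most general unifier of its program-variable part, the same top-cancellation argument shows that $\phi \models As'_k \doteq C$ holds precisely when $s'_k$ and $s_1$ are forced equal by $\theta$, i.e.\ when $\theta(s'_k) = \theta(s_1)$ as terms --- a finite syntactic check. The unsatisfiable case $\phi = \False$, which subsumes every conjunction, is detected while attempting to compute $\theta$. Hence subsumption reduces to computing one unifier and performing finitely many substitution-and-compare tests, and is therefore decidable.

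The main obstacle is justifying the faithfulness of the template-free reformulation: one must check that the freedom in choosing the value of $A$ (with $C$ then determined as $A[\sigma(s_1)]$) does not affect entailment, and that over the term algebra semantic entailment $U \models (s'_k \doteq s_1)$ between program-variable equalities coincides with equality after applying the most general unifier. The latter equivalence relies on the ground term algebra being rich enough, i.e.\ containing infinitely many ground terms, which holds under the standing assumptions on the signature $\Omega$; the degenerate case of a finite term algebra would only make the entailment check easier and can be disposed of by an elementary enumeration.
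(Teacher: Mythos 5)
Your proof is correct and takes essentially the same route as the paper: the paper presents the corollary as an immediate consequence of Theorem~\ref{t:onevar}, whose proof is precisely your top-cancellation normalization $\bigwedge_i (As_i \doteq C) \equiv (As_1\doteq C)\wedge\bigwedge_{i\ge 2}(s_1\doteq s_i)$, and your remaining steps --- most general unifier in solved form for the $n{+}1$ bound, and syntactic comparison after applying the unifier for decidability of subsumption --- are the standard unification facts the paper leaves implicit. The signature-richness caveat you raise (needed so that semantic entailment over ground assignments coincides with syntactic equality under the mgu) is a genuine detail the paper glosses over, and you dispose of it correctly.
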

\noindent
Since the weakest pre-condition of a generic one-variable equality consists of equalities of the
form $As\doteq C$ only, Corollary~\ref{c:onevar} suffices to infer all inter-procedurally valid
one-variable equalities.
In the following, we therefore concentrate on the two-variable case where the weakest pre-condition
consists of conjunctions of equalities of the form $As\doteq Bt$.
First, we observe:

\begin{theorem}\label{t:solve}~
\begin{enumerate}
\item	A single equality $As \doteq Bt$ for ground terms $s,t$ has only finitely many solutions $A=r_1,B=r_2$,
	where at least one of the templates $r_1,r_2$ equals $\bullet$.
\item	Consider the conjunction
	$As_1 \doteq Bt_1 \land As_2 \doteq Bt_2$
	for ground terms $s_1\neq s_2$ and $t_1\neq t_2$. Then it has either no solution
	or there exists a unique solution $A=r_1,B=r_2$,
	where at least one of the templates $r_1,r_2$ equals $\bullet$.
	In the latter case the conjunction is equivalent to $Ar_1 \doteq Br_2$.
\item	Consider the finite conjunction $\bigwedge_{i=1}^k (As_i \doteq Bt_i)$ for ground terms $s_i,t_i$.
	Then the set of all solutions can be effectively computed,
	where at least one of the templates for $A$ or $B$ equals $\bullet$.
\end{enumerate}
\end{theorem}

\begin{proof}
For a proof of the first statement, w.l.o.g.\ assume that $s$ is at least as large as $t$.
Then for size reasons, $r_1=\bullet$. This means that $s=r_2t$ must hold.
If $t$ is not a subterm of $s$, there is no solution at all. Otherwise, i.e., if $s$ contains occurrences of
$t$, then every solution $r_2$ is obtained from $s$ by replacing a non-empty set of occurrences of $t$ with $\bullet$.

Now consider the second statement. If the pair of equalities is satisfiable then by factorization,
there are templates $r_1,r_2$ of which at least one equals $\bullet$ such that $Ar_1 \doteq Br_2$ holds.
Since at the same time $r_2s_i \doteq r_1 t_i$ holds, the equality $Ar_1 \doteq Br_2$ is equivalent to the conjunction.
Moreover, there is exactly one solution $A=r'_1,B=r'_2$ where at least one of the templates $r'_i$ equals $\bullet$,
namely, $r'_1 = r_2$, $r'_2 = r_1$.

Finally, consider the third statement. If $k = 1$, the assertion follows from statement 1.
Therefore now let $k>1$. First assume that for some $i,j$, $s_i\neq s_j$ and $t_i\neq t_j$. Then by statement 2,
the conjunction is unsatisfiable or there is exactly one pair $r_1,r_2$ of templates one of which equals $\bullet$,
such that $A=r_1, B=r_2$ is a solution of the conjunction $As_i \doteq Bt_i \land As_j \doteq B t_j$.
If in the latter case, $r_1s_l \doteq r_2 t_l$ for all $l$, we have obtained a single solution.
Otherwise, the conjunction is unsatisfiable.
Now assume that no such $i,j$ exists. Then either the conjunction is unsatisfiable or all equalities are
syntactically equal.
\end{proof}

\begin{example}
Consider the two equalities:
\[
A f(a,g b, g b) \doteq B g b \qquad
A f(a,g c, g b) \doteq B g c
\]
Then $A = \bullet$ and $B = f (a,\bullet, g b)$ is the only solution for $A,B$ where at least
one of the templates equals $\bullet$.
\qed
\end{example}
\noindent
Applying the arguments which we used to prove Theorem~\ref{t:solve},
we obtain:

\begin{corollary}\label{c:compact}
Consider a conjunction
$
\bigwedge_{i=1}^n As_i \doteq Bt_i
$
with ground terms $s_i,t_i$.
\begin{enumerate}
\item
If it is satisfiable, it is equivalent to the conjunction
of at most two conjuncts.
\item
If it is unsatisfiable, there are at most three conjuncts whose conjunction
is unsatisfiable.
\end{enumerate}
\end{corollary}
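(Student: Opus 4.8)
The plan is to derive the statement purely from Theorem~\ref{t:solve}(2) and top cancellation, organized around a single case distinction: does there exist a pair of indices $i,j$ with $s_i \neq s_j$ \emph{and} $t_i \neq t_j$ (I will call such a pair \emph{mixed})? A preliminary combinatorial observation handles the absence of a mixed pair: if for all $i,j$ we have $s_i = s_j$ or $t_i = t_j$, then either all $s_i$ coincide or all $t_i$ coincide. Indeed, if $s_p \neq s_q$ for some $p,q$, then every index $k$ must satisfy $t_k = t_p = t_q$, since otherwise both pairs $(k,p)$ and $(k,q)$ would force $s_k = s_p$ and $s_k = s_q$, contradicting $s_p \neq s_q$.

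For part (1), assume $\Phi \Let \bigwedge_{i=1}^n As_i \doteq Bt_i$ is satisfiable. If there is no mixed pair, say all $s_i = s$ (the case of all $t_i$ equal being symmetric); then for any $t_i \neq t_j$ the conjuncts $As \doteq Bt_i$ and $As \doteq Bt_j$ would force $Bt_i \doteq Bt_j$, whence $t_i = t_j$ by top cancellation. So all $t_i$ coincide as well and $\Phi$ collapses to a single equality. If a mixed pair $i,j$ exists, Theorem~\ref{t:solve}(2) replaces it, over all assignments, by a single equality $Ar_1 \doteq Br_2$ with, say, $r_2 = \bullet$, i.e.\ by $B \doteq Ar_1$. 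Substituting $\sigma(B) = \sigma(A)r_1$ into any remaining conjunct $As_l \doteq Bt_l$ yields $\sigma(A)s_l = \sigma(A)r_1t_l$, which by top cancellation (template values always contain $\bullet$) is equivalent to the ground identity $s_l = r_1t_l$. Crucially, satisfiability of $\Phi$ forces this identity for \emph{every} $l$: any solution $\tau$ of $\Phi$ satisfies $\tau(B) = \tau(A)r_1$ and $\tau(A)s_l = \tau(B)t_l = \tau(A)r_1t_l$, so $s_l = r_1t_l$ by top cancellation. Hence every solution of the mixed pair already solves all remaining conjuncts, and $\Phi$ is equivalent to the two conjuncts $As_i \doteq Bt_i \wedge As_j \doteq Bt_j$.

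For part (2), assume $\Phi$ is unsatisfiable. If there is no mixed pair, the combinatorial fact gives all $s_i = s$ (the other case symmetric); then either all $t_i$ coincide and $\Phi$ is the single, necessarily unsatisfiable conjunct $As \doteq Bt$, or some $t_i \neq t_j$ and $As \doteq Bt_i \wedge As \doteq Bt_j$ is already unsatisfiable by top cancellation. If a mixed pair $i,j$ exists, Theorem~\ref{t:solve}(2) makes the pair either unsatisfiable (two conjuncts suffice) or equivalent to $B \doteq Ar_1$ as above. In the latter situation unsatisfiability of $\Phi$ means the ground test $s_l = r_1t_l$ fails for at least one $l$, for otherwise $\Phi$ would be equivalent to the satisfiable pair exactly as in part (1). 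For that $l$ the three conjuncts $As_i \doteq Bt_i$, $As_j \doteq Bt_j$, $As_l \doteq Bt_l$ are jointly unsatisfiable, since any solution of the first two forces $\sigma(B) = \sigma(A)r_1$ and hence $s_l = r_1t_l$, a contradiction.

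The main obstacle I anticipate is bookkeeping the gap between the distinguished solutions (where $A$ or $B$ equals $\bullet$) used in Theorem~\ref{t:solve} and genuine \emph{equivalence over all variable assignments}, which is what subsumption requires. The step that bridges the two is the reduction of $As_l \doteq Ar_1t_l$ to the ground identity $s_l = r_1t_l$ by top cancellation; it is valid precisely because template values always contain an occurrence of $\bullet$, and it is what upgrades ``the distinguished solution of the pair satisfies the rest'' to the stronger ``\emph{every} solution of the pair satisfies the rest.'' The symmetric sub-case $r_1 = \bullet$ (so the pair is equivalent to $A \doteq Br_2$) is handled identically with the roles of $A,B$ and of $s,t$ interchanged, each remaining conjunct reducing to $r_2s_l = t_l$.
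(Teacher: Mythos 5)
Your proof is correct and takes essentially the same route as the paper, which derives the corollary by ``applying the arguments used to prove Theorem~\ref{t:solve}'': the case distinction on whether a \emph{mixed} pair with $s_i \neq s_j$ and $t_i \neq t_j$ exists, the reduction of such a pair to a single equality $Ar_1 \doteq Br_2$ via Theorem~\ref{t:solve}(2), and the ground tests $r_1 s_l \doteq r_2 t_l$ deciding whether the remaining conjuncts are implied. Your explicit top-cancellation step, which upgrades ``the distinguished solution satisfies the remaining conjuncts'' to genuine equivalence over all variable assignments, is precisely the elaboration the paper leaves implicit.
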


\noindent
By Theorem~\ref{t:solve}, the assumption {\bf solvability of ground equalities} from Section~\ref{s:wp} is met.
Thus, it remains to solve the constraint systems {\bf S} and {\bf R}, i.e.,
to construct an approximate $T$-subsumption relation which is both
effective and guarantees that every conjunction is approximately $T$-subsumed by the conjunction
of a finite subset of equalities.
In order to construct such a relation, we require stronger insights into the structure of templates
and their compositions.
Let $\C_\Omega$ denote the subset of all terms in $\T_\Omega(\bullet)$
which contain at least one occurrence of $\bullet$, i.e.,
$\C_\Omega = \T_\Omega(\bullet)\setminus \T_\Omega$.
The terms in $\C_\Omega$ have also been called \emph{templates}.
The set $\C_\Omega$, equipped with substitution, is a \emph{free monoid}
with neutral element $\bullet$.
This monoid consists of finite products of the 
irreducible elements in $\C_\Omega$. As usual, we call an element $t$
\emph{irreducible} if $t$ cannot be non-trivially decomposed into a product,
i.e., $t = uv$ implies that $t = u$ with $v=\bullet$ or
$t = v$ with $u=\bullet$. Note that there are \emph{infinitely} many irreducible elements in $\C_\Omega$
--- whenever $\Omega$ contains constructors of rank exceeding 1.

While templates can be uniquely factored, this is
no longer the case for ground terms, i.e., terms without variable occurrences.

\begin{example}
Consider the ground term $t = h(f(h(1),h(1)))$, together with the templates
$s_1 = h( f(\bullet,h(1)))$, $s_2 = h(f(h(1),\bullet))$ and $s_3 = h(f(\bullet,\bullet))$.
All these three templates are distinct. Still,
\begin{flalign*}
&& t = s_1 \;h(\bullet)\;1 = s_2 \;h(\bullet)\;1 = s_3\; h(\bullet)\;1 &&\null\qEd
\end{flalign*}
\end{example}

\noindent
Thus, unique factorization of arbitrary ground terms cannot be hoped for.
Still, we observe that unique factorization can be obtained --- at least up to any fixed finite set
of ground terms.
Let $G$ denote a finite set of ground terms which is closed by subterms.

Let $M_G$ denote the sub-monoid of all templates $m\in \C_\Omega$ whose ground subterms
all are contained in $G$.
Then we have:

\begin{theorem}\label{t:unique}
Assume that $S\subseteq\T_\Omega$ which is closed by subterms.
If $G\subseteq S$, then
every ground term $t\in \T_\Omega\setminus S$, can be uniquely factored into
$
	t = m x
$
such that
\begin{enumerate}
\item\label{uniqueProp1}$m\in M_G$ and $x\not\in S$;
\item\label{uniqueProp2}$x$ is minimal with property~\eqref{uniqueProp1}, i.e., there exists no $x' \in \T_\Omega \setminus S$ such that $x=sx'$ for some $s \in M_G \setminus \{\bullet\}$.
\end{enumerate}
\end{theorem}
\begin{proof}
Since $M_G \subseteq \C_\Omega$, every term in $M_G$ is uniquely factorizable.

Let $t = m_1x_1 = m_2x_2$ with $m_i \in M_G$ and $x_i \in \T_\Omega \setminus S$ are minimal according to property~\eqref{uniqueProp2} for $i=1,2$.
Then either $m_1 = m_2m'$ or $m_2 = m_1m'$ for some $m' \in M_G$ holds.
Otherwise, we have a contradiction to the assumption that $m_1x_1 = m_2x_2$ holds.
Consider the case where $m_1 \neq m_2$, i.e., $m' \neq \bullet$.
If $m_1 = m_2m'$, then we conclude that $m'x_1 = x_2$ holds.
This means, that $x_2$ is not minimal according to property~\eqref{uniqueProp2} which is a contradiction to our assumption.
A similar argument holds for $m_2 = m_1m'$.
Now consider the case where $m_1 = m_2$, then also $x_1 = x_2$ from which the assertion of the theorem follows.
\end{proof}

\begin{example}
Consider the term
\[
t = f(h(f(2,h(1))), h(f(2,h(1))))
\]
and assume that the set
$G$ of forbidden ground subterms is given by $G = \{h(1),1\}$ and $S=G$.
Then $t$ can be decomposed into:
\[
f(\bullet,\bullet)\;h(\bullet)\;f(\bullet,h(1))\;2
\]
If on the other hand, $S=G = \{2\}$, we obtain the decomposition:
\[
f(\bullet,\bullet)\;h(\bullet)\;f(2,\bullet)\;h(\bullet)\;1
\]
If finally, $S$ and $G$ are empty, the term $x$ of Theorem~\ref{t:unique} is the minimal subterm such that
the occurrences of $x$ contains all ground leaves of $t$. This means that $x = f(2,h(1))$, and
we obtain the decomposition:
\begin{flalign*}
&& f(\bullet,\bullet)\;h(\bullet)\;f(2,h(1)) &&\null\qEd
\end{flalign*}
\end{example}

The unique decomposition of the ground term $t$ claimed by Theorem~\ref{t:unique}, is constructed as follows.
Let $X$ denote the set of minimal subterms $x'$ of $t$ such that $x'\not\in G$.
Then we construct the least subterm $x\not\in S$ of $t$ such that all occurrences of subterms $x'\in X$ in $t$
are contained in some occurrence of $x$. This subterm is uniquely determined.
Then define $m$ as the term obtained from $t$ by replacing all occurrences of $x$ with $\bullet$.
This term $m$ is also uniquely determined with $t = m x$. Moreover by construction,
all ground subterms of $m$ are contained in $G$.

\begin{example}\label{e:monoid}
Consider the program from Example~\ref{e:prog}. In this program, no non-ground right-hand side contains
ground subterms. Accordingly, the set $G$ is empty. Since the only ground right-hand side equals the atom $a$,
the decomposition Theorem~\ref{t:unique} allows to uniquely decompose
all run-time values of this program into right-hand sides of assignments.
\qed
\end{example}

Theorem~\ref{t:unique} allows to extend the monoidal techniques of Gulwani et al.~\cite{Gulwani07}
for unary operators to programs where all run-time
values can be uniquely factorized into right-hand sides. This extension is
given in Section~\ref{s:restricted}.
The general case where unique factorization of all run-time values can no longer be
guaranteed, subsequently is presented in Section~\ref{s:general}.
For completeness reasons, we also present simplified versions of the algorithms for
monoidal equalities from~\cite{Gulwani07} in the next section.

 \section{Equalities over a Free Monoid}\label{s:monoid}

Consider a free monoid $M_\Sigma$ with set of generators $\Sigma$.
As usual, the neutral element of $M_\Sigma$ is denoted by $\epsilon$.
Let $F_\Sigma$ be the corresponding free group. $F_\Sigma$ can be
considered as the free monoid generated from $\Sigma \cup \Sigma^-$
(where $\Sigma^- = \{a^-\mid a\in\Sigma\}$ is the set of formal inverses of elements
in $\Sigma$ with $\Sigma\cap\Sigma^-=\emptyset$) modulo exhaustive application
of the cancellation rules $a\cdot a^- = a^-\cdot a = \epsilon$ for all
$a\in\Sigma$.
In particular, the neutral element of $F_\Sigma$ is given by $\epsilon$, and
the inverse $g^{-1}$ of an element $g=a_1\cdots a_k$, $a_i\in\Sigma \cup\Sigma^-$, is given by
$g^{-1} = a_k^{-1}\cdots a_1^{-1}$ where $x^{-1} = x^-$ and ${(x^-)}^{-1} = x$ for $x\in\Sigma$.

For every $w\in M_{\Sigma \cup\Sigma^-}$, the \emph{balance} $\balance{w}$ is the difference
between the number of occurrences of positive and negative letters in $w$, respectively.
Formally, the balance is inductively defined by
\[
  \begin{array}{lcl@{\qquad}l}
    \balance{\epsilon} &=& 0 \\
    \balance{aw} &=& \balance{w} + 1& \text{if $a \in \Sigma$} \\
    \balance{aw} &=& \balance{w} - 1& \text{if $a \in \Sigma^-$}
  \end{array}
\]
Thus, $\balance{aba^-b^-c} = 1$ and $\balance{a^-b} = 0$. Note that the balance stays invariant under
application of the cancellation rules.
Also, $\balance{uv} = \balance{u}+\balance{v}$ and $\balance{u^{-1}} = -\balance{u}$.
Accordingly, the balance $\balance{\cdot} \colon F_\Sigma\to \mathbb{Z}$ is a group homomorphism.
Furthermore, we call $w$ \emph{non-negative} if $\balance{w'} \geq 0$ for all prefixes $w'$ of $w$.
This property is also preserved by cancellation and concatenation but not by inverses.
Instead, we have:
\begin{lemma}\label{l:nonnegative}
  If both $u,v \in M_{\Sigma \cup\Sigma^-}$ are non-negative, and $\balance{u} \geq \balance{v}$ then
  also $uv^{-1}$ is non-negative.
\end{lemma}
\begin{proof}
Consider a prefix $x$ of $uv^{-1}$. If $x$ is a prefix of $u$,
$\balance{x} \geq 0$ since $u$ is non-negative. Otherwise, $x = u v'^{-1}$ for some suffix $v'$ of $v$.
Then $\balance{v'} \leq \balance{v}$, since $v$ is non-negative.
Therefore, $\balance{u v'^{-1}} = \balance{u} - \balance{v'} \geq \balance{u} - \balance{v} \geq 0$.
\end{proof}

\noindent
We consider equalities of the form:
\begin{equation}
AuA^{-1} = Bu'B^{-1}	\label{eq:base}
\end{equation}
where $A,B$ are variables which take values in $M_\Sigma$,
and $u,u' \in M_{\Sigma \cup \Sigma^-}$ are maximally canceled.
If the equality is satisfiable, then necessarily $\balance{u} = \balance{u'}$ holds.
Assume from now on that $u,u'$ are maximally canceled, and $\balance{u} = \balance{u'}$.
Furthermore, we assume that 
$u,u'$ are both non-negative.
We then have:
\begin{lemma}\label{l:base}
If $\balance{u} = \balance{u'} = 0$, then the equality~\eqref{eq:base} is either trivial, is equivalent to an
equality $As = B$ or an equality $A=Bs$ for some $s\in M_\Sigma$ or is contradictory.
\end{lemma}
\begin{proof}
Assume $u=\epsilon$. Then $B= Bu'$. Thus either $u'=\epsilon$ and the equality is trivial,
or $u'\neq\epsilon$ and the equality is contradictory.

Therefore, assume that $u\neq\epsilon\neq u'$.
Then $u$ and $u'$ must be of the form $u = xyz^{-1}$, $u' = x'y'z'^{-1}$ for maximal $x,x',z,z'\in M_\Sigma$,
i.e., $y,y'$ each are either equal to $\epsilon$ or of the form $a^-w b$ for some $a,b\in\Sigma$.
Then all $x,x',z,z'$ are different from $\epsilon$.
Then equality~\eqref{eq:base} is equivalent to:
\[
Ax = Bx' \land y=y' \land Az = Bz'
\]
By bottom cancellation, these three equalities either are equivalent to one fixed relation between $As=B$ or $A=Bs$ for some
$s\in M_\Sigma$,
or to a contradiction.
\end{proof}
\begin{example}
Consider the equality
\[ Affg^{-1}f^{-1}A^{-1} \doteq Bfg^{-1}B^{-1} \]
which is, according to Lemma~\ref{l:base}, equivalent to
\[ Aff \doteq Bf \land \epsilon \doteq \epsilon \land Afg \doteq Bg \]
By bottom cancellation, we conclude that the conjunction is equivalent to a solved equality $Af \doteq B$.
\qed
\end{example}
\noindent
Now assume that there is another equality:
\begin{equation}
AvA^{-1} = Bv'B^{-1}	\label{eq:base-two}
\end{equation}
with non-negative $v,v'$ where $\balance{v} = \balance{v'}$.
\begin{theorem}\label{t:base}
The two equalities~\eqref{eq:base} and~\eqref{eq:base-two} are effectively equivalent
either to one solved equality, or to a single equality of the form~\eqref{eq:base} or
are contradictory.
\end{theorem}
\begin{proof}
We perform an induction on the sum of balances $\balance{u} + \balance{v}$.
W.l.o.g.\ assume that $\balance{u} \geq \balance{v}$.
If $\balance{v} = 0$, then the assertion follows from Lemma~\ref{l:base}.
Therefore, assume that $\balance{v}>0$, and $r\geq 1$ is the maximal number such that
$\balance{v^r} = r\cdot \balance{v}\leq \balance{u}$.
Then we construct the elements $uv^{-r}$ and $u'v'^{-r}$, which are both non-negative
by Lemma~\ref{l:nonnegative}.
Let $w, w'$ be obtained from $uv^{-r}$ and $u'v'^{-r}$ by exhaustively applying
the cancellation rules. By construction, these are non-negative as well.
Then we consider the equality:
\begin{equation}
AwA^{-1} = Bw'B^{-1} \label{eq:base-three}
\end{equation}
which is implied by the two equalities~\eqref{eq:base} and~\eqref{eq:base-two}.

If $w=\epsilon$, then either $w'=\epsilon$ holds and the equality~\eqref{eq:base-three} is trivial,
or $w'\neq\epsilon$ and equality~\eqref{eq:base-three} is contradictory.
In the first case, the equality~\eqref{eq:base-two} is implied by equality~\eqref{eq:base},
while in the second case the two given equalities~\eqref{eq:base} and~\eqref{eq:base-two} are contradictory.
The same argument applies when $w'=\epsilon$ with the roles of $A,B$ exchanged.
Therefore now assume that $w\neq\epsilon\neq w'$.
Otherwise, the pair of equalities~\eqref{eq:base} and~\eqref{eq:base-two}
is equivalent to the pair of equalities~\eqref{eq:base-two} and~\eqref{eq:base-three},
where the sum of balances $\balance{w}+\balance{v}\leq \balance{w}+r\cdot\balance{v} = \balance{u} < \balance{u}+\balance{v}$ has decreased.
For these, the claim follows by inductive hypothesis.
\end{proof}
In~\cite{Gulwani07} a similar argument is presented. The argument there together with the resulting
algorithm has been significantly simplified by introducing the extra notion of \emph{non-negativity}.

\section{Initialization-restricted Programs}\label{s:restricted}

In the subsequent let $R$ be the set of ground right-hand sides of assignments,
and $G$ be the set of ground subterms of non-ground right-hand sides
of assignments of our program.
Then generally, each value $x$ possibly constructed at run-time by the
program is of the form $x = x'r$ where $x' \in M_G$ and $r \in R$.
\begin{lemma}
  Each program variable in $\X$ ranges over the set $M_G R$.
  \qed
\end{lemma}
This means that for pre-conditions $\phi$ possibly occurring in a {\bf WP} calculation
for a program invariant, we are only interested in variable assignments $\sigma$ which map
each program variable $\x$ to a possible run-time value for $\x$, i.e., to a value
from the set $M_G R$.
In the subsequent let
\[
  T \Let M_G R \quad\text{and}\quad T' \Let M_G \X
\]
then during the {\bf WP} computation template variables are applied to ground terms in $T$ and non-ground terms in $T'$ only.
Henceforth, we therefore no longer consider general satisfiability, equivalence and subsumption, but only
$T$-satisfiability, $T$-equivalence and $T$-subsumption.
This restriction is crucial for the generalization of the monoidal techniques from~\cite{Gulwani07}.
In the following, we first consider the sub-class of programs $p$
where set $R$ of ground right-hand sides of $p$ satisfies the two properties:
\begin{enumerate}
\item $R \cap G = \emptyset$.
\item The elements in $R$ are mutually incomparable ground terms, i.e.,
      for $r_1,r_2\in R$, $r_1$ is a subterm of $r_2$ iff $r_1= r_2$.
\end{enumerate}
The program $p$ then is called \emph{initialization-restricted} (IR for short).

\begin{example}
Assume that the non-ground right-hand sides of assignments of a program
are $f(\x,h(1))$ and $f(2,h(\y))$. Then the set $G$ is given by $G = \{1,h(1),2\}$.
A suitable set $R$ of ground right-hand sides might be, e.g., $R = \{0,a\}$.
\qed
\end{example}

Our condition here is not as restrictive as it might seem.
Programs where each variable is initialized by a non-deterministic assignment,
are all IR\@.
The same holds true for programs where all non-ground right-hand sides of
assignments do not contain ground terms, and variables are initialized with atoms only.
The latter property is met by our Example~\ref{e:prog}.
By suitably massaging variable initializations, it also comprises all programs using monadic
operators only (as in~\cite{Gulwani07}).

We distinguish between two-variable equalities of the following formats:
\[
\begin{array}{@{}l@{\qquad}lcl@{\qquad}l@{}}
{[}F_{\x,\y}{]}         & As\x &\doteq& Bt\y        & \text{where $s,t \in M_G$}\\
{[}F_{\cdot,\x}{]}      & As &\doteq& Bt\x          & \text{where $s \in T$ and $t \in M_G$}\\
{[}F_{\x,\cdot}{]}      & At\x &\doteq& Bs          & \text{where $s \in T$ and $t \in M_G$}\\
\end{array}
\]
For each format separately, we observe:
\begin{theorem}\label{t:ressubcompact}~
\begin{description}
\item[$T$-subsumption.]
For finite sets $E,E'$ of two-variable equalities of the same format
it is decidable whether $\bigwedge E$ $T$-subsumes $\bigwedge E'$ or not.
\item[$T$-compactness.]
Every $T$-satisfiable conjunction of a set $E$ of two-variable equalities of the same format is $T$-subsumed
by a conjunction of a subset of at most three equalities in $E$.
\end{description}
\end{theorem}
\begin{proof}
In order to prove the theorem we show that every $T$-satisfiable conjunction of equalities of the same format
is effectively $T$-subsumed by a conjunction of at most three equalities.
Furthermore, the proof indicates that, given three equalities, it can be effectively decided whether
or not a fourth equality is $T$-subsumed or not.
We consider one case of the assertion of the theorem after the other.

\textbf{Same variable on both sides.}
Consider the two distinct equalities
\[
As_1\x \doteq Bt_1\x\qquad As_2\x \doteq Bt_2\x
\]
where $s_i,t_i\in M_G$, and assume that the conjunction of them is $T$-satisfiable.
We claim that then
$s_1\x\neq s_2\x$ and $t_1\x\neq t_2\x$.
For that, we convince ourselves first that
$s_1\neq s_2$ and $t_1\neq t_2$ must hold.
Then for a contradiction, assume that $s_1\x \doteq s_2\x$.
Since $s_1\neq s_2$, their unifier must map $\x$ to a ground term of $s_1$ and $s_2$. These ground terms
are all contained in $G$, whereas we only consider values for $\x$ in $M_G R$, which is
disjoint from $G$.
A similar argument also shows that $t_1\x\neq t_2\x$ holds.
Thus by factorization, $Ar_1\doteq Br_2$ must hold for some $r_1,r_2 \in M_G$ of which at least one equals $\bullet$.
Due to unique factorization, we then may cancel $\x$ on both sides, resulting
in the equalities $As_1 \doteq Bt_1$ and $As_2 \doteq Bt_2$. These can be simplified
to one equality $Ar_1 \doteq Br_2$ for some $r_1,r_2\in M_G$ where $r_i=\bullet$ for at least one $i$.
Hence, the second equality is $T$-subsumed by the first one.

\textbf{One-sided single variable.}
Consider the three distinct equalities
\[
As_1 \doteq Bt_1\x\qquad As_2 \doteq Bt_2\x\qquad As_3 \doteq Bt_3\x
\]
where $s_i\in M_GR$ and $t_i\in M_G$, and assume that the conjunction of them is $T$-satisfiable.
Again, we argue that all $s_i$ must be distinct as well as all $t_i\x$.
Then again by factorization, $Ar_1 \doteq Br_2$ for some templates $r_1,r_2$ of which at least one equals $\bullet$.
By unique factorization, $s_1 = s'_1r$ for some $s'_1\in M_G$ and $r\in R$.
Therefore, again by unique factorization, the value for $\x$ also must terminate
in the term $r$, i.e., is of the form $\x = x'r$ for some $x' \in M_G$.
Accordingly, also $s_2,s_3$ can be factored as $s_i = s'_ir$ for suitable $s'_i\in M_G$.
Canceling out the ground terms $r$, we obtain the monoid equalities:
\[
As'_1 \doteq Bt_1x'\qquad As'_2 \doteq Bt_2x'\qquad As'_3 \doteq Bt_3x'
\]
Assume w.l.o.g., that the balance of $s_1$ is less or equal to the balances of $s_2$ and $s_3$.
Then the conjunction of the three equalities is $T$-equivalent to:
\[
As'_1 \doteq Bt_1x'\qquad
As'_2{s'_1}^{-1}A^{-1} \doteq Bt_2t_1^{-1} B^{-1} \qquad
As'_3{s'_1}^{-1}A^{-1} \doteq Bt_3t_1^{-1} B^{-1}
\]
where $s'_2{s'_1}^{-1},t_2t_1^{-1}, s'_3{s'_1}^{-1}, t_3t_1^{-1} $ all are non-negative.
According to Theorem~\ref{t:base}, the two last equalities are either
$T$-equivalent to each other, which means that the initial conjunction is $T$-equivalent to the conjunction of the two equalities
\[
As_1 \doteq Bt_1\x\qquad As_2 \doteq Bt_2\x
\]
and the assertion follows.
Otherwise, they are $T$-equivalent to an equality $Ar_1 \doteq Br_2$ for templates $r_1,r_2$ of which at least one equals $\bullet$.
A fourth equality is then either $T$-subsumed or falsifies the conjunction of equalities.
A similar argument applies to equalities of the form $At_i\x \doteq Bs_i$.

\textbf{Different variables on both sides.}
Consider the three distinct equalities
\[
As_1\x \doteq Bt_1\y\qquad As_2\x \doteq Bt_2\y\qquad As_3\x \doteq Bt_3\y
\]
for distinct program variables $\x,\y$
where $s_i,t_i\in M_G$, and assume that the conjunction of them is $T$-satisfiable.
As before, we argue that
$s_i\x\neq s_j\x$,
$t_i\y\neq t_j\y$ for all $i\neq j$ must hold.
Then by factorization, $A$ is a prefix of $B$ or vice versa.
But then, due to unique factorization, also $As_1$ is a prefix of $Bt_1$ or vice versa.
This means that there are ${\bf u},{\bf v}\in M_G$ of which one equals $\bullet$ such that
$As_1{\bf u} \doteq Bt_1{\bf v}$, which (by top cancellation) implies that ${\bf v}\x = {\bf u}\y$ holds.
From that, we conclude that $As_i{\bf u} \doteq Bt_i{\bf v}$ for all $i$.
Assume again w.l.o.g.\ that the balance of $s_1$ is less or equal to the balances of $s_2$ and $s_3$.
We then proceed as in the last case to obtain the $T$-equivalent three equalities:
\[
As_1{\bf u} \doteq Bt_1{\bf v} \qquad
As_2s_1^{-1}A^{-1} \doteq Bt_2t_1^{-1} B^{-1} \qquad
As_3s_1^{-1}A^{-1} \doteq Bt_3t_1^{-1} B^{-1}
\]
where $s_2s_1^{-1},t_2t_1^{-1}, s_3s_1^{-1}, t_3t_1^{-1} $ all are non-negative.
According to Theorem~\ref{t:base}, the latter two equalities again are $T$-equivalent to
an equality $Ar_1 \doteq Br_2$ for templates $r_1,r_2$ of which at least one equals $\bullet$,
or are $T$-equivalent to each other, and the assertion of the theorem follows.
This completes the proof.
\end{proof}
 
It relies on the unique factorization property together with the monoidal techniques from Section~\ref{s:monoid}.
Since $T$-subsumption is decidable, at least for equalities of the same format,
we define an approximate $T$-subsumption relation $\bigwedge E \impliesSharp\bigwedge E'$
for conjunctions of equalities as follows.
Let $E_F$ and $E_F'$ denote the subsets of equalities of the same format $F$ in $E$ and $E'$, respectively.
Then $\bigwedge E \impliesSharp\bigwedge E'$ holds iff
$\bigwedge E_F$ $T$-subsumes $\bigwedge E_F'$ for all formats $F$.
Hence, by Theorem~\ref{t:ressubcompact}, we obtain:

\begin{corollary}\label{c:ressubcompact}
Assume that $n$ is the number of program variables.
\begin{description}
\item[Approximate $T$-subsumption.]
For finite sets $E,E'$ of two-variable equalities,
it is decidable whether $\bigwedge E$ approximately $T$-subsumes $\bigwedge E'$ or not.
\item[Approximate $T$-compactness.]
Every $T$-satisfiable conjunction of a set $E$ of two-variable equalities
is approximately $T$-subsumed by a conjunction of a subset of at most $\mathcal{O}(n^2)$ equalities in $E$.
\end{description}
\end{corollary}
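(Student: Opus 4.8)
The plan is to derive both statements from Theorem \ref{t:ressubcompact} by a purely combinatorial aggregation over formats, exploiting the fact that approximate $T$-subsumption was \emph{defined} format-by-format. First I would observe that only finitely many formats can occur. The format of a two-variable equality $As' \doteq Bt'$ is determined by whether each of $s',t'$ is ground (i.e.\ lies in $T$) or factors as a template in $M_G$ applied to a program variable, and in the latter case by which variable occurs. Since the program is IR, the run-time values in $T = M_G R$ factor uniquely (Theorem \ref{t:unique}), so this classification is effective, and the only possible formats are $F_{\x,\y}$, $F_{\cdot,\x}$, $F_{\x,\cdot}$ for program variables $\x,\y$, together with the degenerate ground format in which both sides are ground. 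With $n$ program variables there are at most $n^2$ formats of shape $F_{\x,\y}$ and $n$ each of shapes $F_{\cdot,\x}$ and $F_{\x,\cdot}$, hence at most $(n+1)^2 = \mathcal{O}(n^2)$ formats in total.

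For decidability of approximate $T$-subsumption I would simply unfold the definition: $\bigwedge E \impliesSharp \bigwedge E'$ holds iff for every format $F$ the conjunction $\bigwedge E_F$ $T$-subsumes $\bigwedge E'_F$. As $E$ and $E'$ are finite, only finitely many formats appear, and for each I classify the relevant conjuncts into $E_F$ and $E'_F$ as above. Each such pair then consists of equalities of a single format, so the $T$-subsumption test of Theorem \ref{t:ressubcompact} applies (respectively the solution-set computation of Theorem \ref{t:solve} for the ground format) and decides the individual instance. Conjoining the finitely many decisions yields the required effective procedure.

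For approximate $T$-compactness, let $\bigwedge E$ be $T$-satisfiable. Any sub-conjunction of a $T$-satisfiable conjunction is again $T$-satisfiable, so in particular each restriction $\bigwedge E_F$ is $T$-satisfiable. By the compactness part of Theorem \ref{t:ressubcompact} (respectively Corollary \ref{c:compact} for the ground format), I can choose for every format $F$ a subset $E_F^0 \subseteq E_F$ of at most three equalities whose conjunction $T$-subsumes $\bigwedge E_F$. Setting $E^0 = \bigcup_F E_F^0$, I have $E^0 \subseteq E$ and $|E^0| \le 3(n+1)^2 = \mathcal{O}(n^2)$. The key observation is that the format-$F$ restriction of $E^0$ is exactly $E_F^0$, since the chosen subsets are format-pure and cannot interfere across formats; hence $\bigwedge E^0_F$ $T$-subsumes $\bigwedge E_F$ for every $F$, which is by definition $\bigwedge E^0 \impliesSharp \bigwedge E$.

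I expect the only genuinely delicate point to be the first step, namely verifying that the format classification is total and effective: that every equality arising in the {\bf WP} computation does fall into one of the listed formats once its ground and variable sides are factored. This is precisely where the IR assumption and the unique factorization of Theorem \ref{t:unique} are indispensable. The aggregation itself is then immediate, and this is by design: approximate $T$-subsumption was deliberately made local to each format, so that subsets chosen independently per format recombine with no cross-format proof obligations, the real mathematical content having already been discharged inside Theorem \ref{t:ressubcompact}.
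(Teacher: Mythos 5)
Your proof is correct and takes essentially the same route as the paper: there the corollary is obtained directly from Theorem \ref{t:ressubcompact} via the format-wise definition of approximate $T$-subsumption, with exactly your counting argument (finitely many formats, $\mathcal{O}(n^2)$ of them, at most three equalities retained per format). Your explicit handling of the degenerate ground--ground equalities via Corollary \ref{c:compact} merely spells out a case the paper leaves implicit, and your closing caveat is harmless since the format classification is purely syntactic.
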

Overall, we therefore conclude for IR programs:
\begin{theorem}\label{t:restricted}
Assume that $p$ is an IR program. Then for every program point $u$,
the set of all two-variable equalities can be determined that are valid when reaching
program point $u$.
\end{theorem}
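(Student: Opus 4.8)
The plan is to assemble the pieces collected in the wish list of Section~\ref{s:wp} into an effective greatest-fixpoint computation for the constraint systems $\mathbf{S}$ and $\mathbf{R}$, relativized to $T = M_G R$. First I would represent each $\mathbf{WP}$ transformer by its action on the \emph{finitely many} generic post-conditions: one equality $A\x\doteq C$ per program variable $\x$, and one equality $A\x\doteq B\y$ per ordered pair of distinct program variables. Since $n=|\X|$ is finite, a transformer is a finite tuple of conjunctions. The iteration applies, along each edge, the $\mathbf{WP}$ rules for assignments (substitution, using the extension $\bar f$) and, at call edges, composition with the summary $\semT{s_p}$. Crucially, the substitution rules keep every generated conjunct within one of the formats $[F_{\x,\y}]$, $[F_{\cdot,\x}]$, $[F_{\x,\cdot}]$ (the only remaining case being a ground equality treated directly), so the format-wise machinery of Theorem~\ref{t:ressubcompact} stays applicable throughout. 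After each edge application I would renormalize every conjunction: by approximate $T$-compactness (Corollary~\ref{c:ressubcompact}) it is approximately $T$-subsumed by a subset of at most $\mathcal{O}(n^2)$ of its equalities, so each transformer admits a bounded finite representation.

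Next I would argue termination. The iteration for $\mathbf{S}$ (and then, reusing its result, for $\mathbf{R}$) produces a monotone chain whose limit $f^\ast$ assigns to each generic post-condition a possibly \emph{infinite} conjunction $\bigwedge E$, such as the one accumulated in Example~\ref{e:prog}. By approximate $T$-compactness, $\bigwedge E$ is approximately $T$-subsumed by the conjunction of a \emph{finite} subset $E_0\subseteq E$. All equalities of $E_0$ are generated after finitely many iteration steps, say after step $N$; from step $N$ on the iterate already approximately $T$-subsumes --- hence, since approximate $T$-subsumption implies $T$-subsumption, is $T$-equivalent to --- $f^\ast$. Thus the chain stabilizes up to $T$-equivalence after finitely many steps. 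I detect stabilization by an approximate $T$-subsumption test between successive iterates, which is decidable by Corollary~\ref{c:ressubcompact}. Because approximate $T$-subsumption only ever implies (never over-approximates) $T$-subsumption, renormalizing by it never discards an equality that was genuinely needed, so the computed fixpoint is $T$-equivalent to the true greatest solution of $\mathbf{R}$.

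Finally I would extract the invariants exactly as prescribed in Section~\ref{s:wp}. For a program point $u$ and program variables $\x,\y$, I form the universal closure of $\semR{u}(A\x\doteq B\y)$; the quantifier-elimination rules of Section~\ref{s:wp} remove all occurrences of program variables, leaving a finite conjunction of \emph{ground} equalities $As_i\doteq Bt_i$. By Theorem~\ref{t:solve} (solvability of ground equalities) I can then effectively compute the set of all pairs of templates $r_1,r_2$, at least one equal to $\bullet$, such that $A=r_1$, $B=r_2$ solves the conjunction; each such pair corresponds to a valid invariant $r_1\x\doteq r_2\y$, and by construction \emph{all} valid two-variable equalities at $u$ arise this way. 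One-variable equalities are read off analogously using Corollary~\ref{c:onevar}.

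The main obstacle, and the point I would treat most carefully, is reconciling the \emph{approximate} relation --- which is what Corollary~\ref{c:ressubcompact} makes decidable and bounds by $\mathcal{O}(n^2)$ --- with the requirement to compute the genuine greatest fixpoint up to $T$-equivalence. The tension is that the terms $s_i,t_i$ occurring in the conjuncts grow without bound during the unfolding, so finiteness cannot come from there being only finitely many distinct conjunctions; it must come instead from the compactness argument, namely that only finitely many iteration steps are needed to generate a $T$-equivalent finite subconjunction. I would therefore make explicit that (i) approximate $T$-subsumption is a \emph{sound} strengthening of $T$-subsumption, guaranteeing that bounded renormalization preserves $T$-equivalence, and (ii) the per-format restriction underlying Theorem~\ref{t:ressubcompact} is respected by the $\mathbf{WP}$ transformers, so that the decidable approximate test genuinely applies to every conjunction arising in the iteration.
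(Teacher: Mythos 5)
Your proposal is correct and follows essentially the same route as the paper's proof: effective computation of the greatest solutions of the constraint systems \textbf{S} and \textbf{R} via the decidable approximate $T$-subsumption and approximate $T$-compactness of Corollary~\ref{c:ressubcompact}, followed by extraction of the valid invariants from the universal closure of $\semR{u}(A\x\doteq B\y)$ (resp.\ $\semR{u}(A\x\doteq C)$) using Theorem~\ref{t:solve}. The paper states these steps tersely, delegating the termination and detection arguments to the discussion in Section~\ref{s:wp}; your elaboration of exactly those points (format preservation under substitution, bounded renormalization, stabilization detection) fills in the same argument rather than departing from it.
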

\begin{proof}
By Corollary~\ref{c:ressubcompact}, the greatest solutions of the constraint systems {\bf S} and {\bf R} can be
effectively computed. Let $\semR{u}$, $u$ program point, denote the greatest solution of the system {\bf R}.
Then the set of valid equalities $s\x\doteq t\y$ between program variables $\x$, $\y$ is given by the
set of solutions to a system of ground equalities which are obtained by universal quantification
over all program variables of the conjunction of equalities $\semR{u}(A\x\doteq B\y)$.
By Theorem~\ref{t:solve}, a representation of the set of solutions for the template variables $A,B$ in
this conjunction can be explicitly computed.
Likewise, the set of valid equalities $x\doteq t$ for program variable $\x$ and ground term $t$
can be extracted from the universal quantification
over all program variables of the conjunction of equalities $\semR{u}(A\x\doteq C)$.
The resulting conjunction may either equal $\False$ (no constant value for $\x$) or contain only the variable $C$.
Consequently, the possible constant value for $\x$ and program point $u$
can also be effectively computed.
This completes the proof.
\end{proof}

\begin{example}\label{e:prog2}
According to our constructions in Section~\ref{s:wp} and Theorem~\ref{t:solve},
the set of all inter-procedurally valid assertions can be obtained from
the greatest solutions to the constraint systems {\bf S} and {\bf R}.
Consider, e.g., the constraint system {\bf R} for the recursive procedure $p$ from Section~\ref{s:programs},
as defined by the control flow graph of Figure~\ref{f:prog}.
If Round-Robin iteration is applied to calculate the transformers $\semT{u}$ for the
program points $u = 4,5,6,7$, we obtain for the generic post-condition $A\x\doteq B\y$
the result depicted by Table~\ref{t:RRiter}
\begin{table}
\caption{\label{t:RRiter}Round-Robin iteration for the procedure $p$ from Figure~\ref{f:prog}}
\centering$\begin{array}{|l||lll|lll|lll|}
\hline
	& &1& \  & & 2 & \ & & 3 & \ 	\\
\hline\hline
7	& A\x&\doteq &B\y	&&&		\	&&&	\ \\
\hline
6	& A\x&\doteq &B f(\bullet,\bullet)\y&&&	\	&&&	\ \\
\hline
5	& &\True& \	&		A\x&\doteq &B f(\bullet,\bullet)\y&
					Af(\bullet,\bullet)\x&\doteq &B f(\bullet,\bullet) f(\bullet,\bullet)\y\\
\hline
4	& A\x&\doteq &B\y &		Af(\bullet,\bullet)\x&\doteq &B f(\bullet,\bullet)\y&
					Af(\bullet,\bullet)f(\bullet,\bullet)\x&\doteq &
					B f(\bullet,\bullet) f(\bullet,\bullet)\y\\
\hline
\end{array}$
\end{table}
where in the $i$th column, we have only displayed pre-conditions
which have additionally been attained in the $i$th iteration for the program points
$7,6,5$ and $4$, respectively.
For convenience, we have displayed the terms in equalities according to their unique factorizations.
For program point $4$, the two equalities after the second iteration,
imply:
\[
Af(\bullet,\bullet)A^{-1} \doteq Bf(\bullet,\bullet)B^{-1}
\]
The second equality for program point $4$ together with this identity imply that
\[
	Af(\bullet,\bullet)A^{-1}Af(\bullet,\bullet)\x \doteq Bf(\bullet,\bullet)B^{-1}Bf(\bullet,\bullet)\y
\]
from which the third equality for program point $4$ as provided by the
third iteration follows.
Thus, Round-Robin fixpoint iteration reaches the greatest fixpoint after the third iteration.
\qed
\end{example}
  
\section{Unrestricted Programs}\label{s:general}

Our analysis of IR programs relied on the fact that all run-time values
of program variables can be uniquely factorized.
This was possible since in IR programs the ``bottom end'' of values
can be uniquely identified by means of the ground right-hand sides from $R$.
In general, though, ground right-hand sides could very well also occur as subterms
of other right-hand sides in the program.
In this case, we can no longer assume that $R$
serves as such a handy set of end marker terms.
At first sight, therefore, the monoidal method seems no longer applicable.
A second look, however, reveals that the monoidal method essentially fails
only, where program variables take \emph{small} values.
Again, let $R$ and $G$ denote the set of all ground right-hand sides and the set
of all ground subterms of non-ground right-hand sides of assignments in the program, respectively.
We call a term in $M_G R$ \emph{small} if it is a ground subterm of a right-hand side
of an assignment. Let us denote the (finite) set of all small terms by $S$.
Thus in particular, $R\subseteq S$.
The terms in $M_G R$ which are not small, are called \emph{large}, i.e., we then have:
\[
  T \Let M_G R = S \cupplus L
\]
\begin{example}
Consider the program fragment consisting of the statements:
\[
\x_1 \Let a;\; \x_2\Let f(\x_1,a);\; \x_3\Let g(\x_2,f(a,a))
\]
Then $a$ is a ground right-hand side, and $f(a,a)$ is a ground subterm of a non-ground right-hand side,
i.e., $a\in R$ and $f(a,a)\in G$.
Since the term $f(a,a)$ is also contained in $M_G R$, it is \emph{small}.
\qed
\end{example}
Let $\bar R$ be the set of \emph{minimal} elements in $M_G R$ which are large, i.e., not contained in $S$.
Then by Theorem~\ref{t:unique},
every large term $t$ can be uniquely factored such that $t = t'r$
where $t' \in M_G$ and $r \in \bar R$. We then have for small and large terms:
\[
  S \Let M_G R \cap (R^*\cup G) \quad\text{and}\quad L \Let M_G \bar R
\]
where $R^*$ is the subterm closure of $R$.
For small terms, i.e., for terms in $S$, on the other hand, we cannot hope for unique factorizations.
Since there are finitely many small terms only, we take care of small terms by two means:
\begin{itemize}
\item We restrict the formats $[F_{\x,\cdot}]$ and $[F_{\cdot,\x}]$ from the last section to the case where the occurring
      ground terms are large and introduce dedicated sub-formats $[F_{\x,s}]$ and $[F_{s,\x}]$
      for each small term $s$ in the equalities.
\item For $T$-subsumption, we single out the case of subsumption w.r.t.\ assignments of large terms only and
      treat subsumption w.r.t.\ assignments assigning small terms separately.
\end{itemize}
The set of non-ground terms is again given as $T' \Let M_G \X$.
Thus, we now consider the following formats of two-variable equalities:
\[
\begin{array}{@{}l@{\qquad}lcl@{\qquad}l@{}}
{[}F_{\x,\y}]      & As\x &\doteq& Bt\y     & \text{where $s,t \in M_G$} \\
{[}F_{\cdot,\x}]      & As &\doteq& Bt\x       & \text{where $s \in L$ and $t \in M_G$} \\
{[}F_{s,\x}]    & As &\doteq& Bt\x       & \text{where $s \in S$ and $t \in M_G$} \\
{[}F_{\x,\cdot}]      & At\x &\doteq& Bs       & \text{where $s \in L$ and $t \in M_G$} \\
{[}F_{\x,s}]    & At\x &\doteq& Bs       & \text{where $s \in S$ and $t \in M_G$} \\
\end{array}
\]
In the following, let us call
a substitution $\sigma$ of program variables \emph{small}, if for every program variable $\x$,
$\sigma(\x)$ either equals $\x$ or is a small ground term.
The notions of satisfiability, equivalence and subsumption restricted to the set $T$ can be
inferred by means of the corresponding notions restricted to the set $L$ of large terms only.
We have:
\begin{itemize}
\item
A conjunction $\phi$ of equalities is $T$-satisfiable iff there is a small substitution $\sigma$
such that $\sigma(\phi)$ is $L$-satisfiable.
\item
A conjunction $\phi$ $T$-subsumes an equality $e$, iff for every small substitution $\sigma$,
$\sigma(\phi)$ $L$-subsumes $\sigma(e)$.
\end{itemize}

\noindent
According to this observation, it seems plausible to consider the analogue of Theorem~\ref{t:ressubcompact}
for $L$-subsumption and $L$-compactness only. We obtain:

\begin{theorem}\label{t:gensubcompact}~
\begin{description}
\item[$L$-subsumption.]
For finite sets $E,E'$ of two-variable equalities of the same format
it is decidable whether $\bigwedge E$ $L$-subsumes $\bigwedge E'$ or not.
\item[$L$-compactness.]
Every $L$-satisfiable conjunction of a set $E$ of two-variable equalities of the same format is $L$-subsumed
by a conjunction of a subset of at most three equalities in $E$.
\end{description}
\end{theorem}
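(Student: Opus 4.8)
The plan is to reduce Theorem~\ref{t:gensubcompact} to the already-established Theorem~\ref{t:ressubcompact} by exploiting the unique factorization of \emph{large} terms guaranteed by Theorem~\ref{t:unique}. The crucial observation is that once we restrict attention to the set $L$ of large terms, every large term $t$ admits a unique factorization $t = mr$ with $m\in M_G$ and $r\in\bar R$. This means $\bar R$ plays exactly the role that the set $R$ of ground right-hand sides played in the IR case: it serves as a finite set of \emph{incomparable end-marker terms} whose occurrences uniquely identify the ``bottom end'' of each large value. I would first verify that $\bar R$ indeed satisfies the two defining conditions of an IR program relativized to $L$, namely that $\bar R\cap G=\emptyset$ (immediate, since elements of $\bar R$ are large hence not in $S\supseteq G$) and that the elements of $\bar R$ are mutually incomparable (which follows from their minimality among large terms in $M_G R$).

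Having established that $L$ behaves exactly like the value set $T=M_G R$ of an IR program, I would then carry out, format by format, the same monoidal analysis that underlies Theorem~\ref{t:ressubcompact}. Concretely, for each format I would substitute the unique factorization of the occurring large terms and observe that an $L$-assignment $\sigma$ maps each program variable to a value of the form $m r$ with $r\in\bar R$; projecting onto the $M_G$-component turns the equalities of formats $[F_{\x,\y}]$, $[F_{\cdot,\x}]$, $[F_{\x,\cdot}]$ into monoidal equalities over the free monoid $\C_\Omega$, exactly as in Section~\ref{a:monoid}. The decidability of $L$-subsumption and the compactness bound of three equalities then transfer verbatim, since the monoidal arguments of \cite{Gulwani07} depend only on the free-monoid structure of templates together with a fixed finite collection of incomparable end markers, both of which are now supplied by $M_G$ and $\bar R$ respectively. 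The small sub-formats $[F_{s,\x}]$ and $[F_{\x,s}]$ are handled with the \emph{same} argument, treating the fixed small constant $s$ as a distinguished right-hand side: because each such format fixes $s$, within a single format the problem again reduces to a one-sided monoidal equality and the three-equality bound applies.

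The main obstacle I anticipate is \emph{not} the monoidal bookkeeping but ensuring that the cancellation and factorization lemmas (Bottom/Top Cancellation, Factorization from Section~\ref{s:basics}) remain applicable when the relevant values are guaranteed to be large rather than arbitrary. The delicate point is that these lemmas require certain subterms to be \emph{distinct}; in the IR setting incomparability of $R$ guaranteed this automatically, and I must check that incomparability of $\bar R$ together with largeness provides the analogous guarantee here. In particular, when deriving the factorization $A r_1 \doteq B r_2$ for a pair of equalities, I need that substituting two distinct end markers $r,r'\in\bar R$ into a template yields distinct ground terms, which is where unique factorization (Theorem~\ref{t:unique}) does the real work: it certifies that the $\bar R$-component is recoverable from any large value, so two values agreeing after cancellation must share the same end marker. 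Once this recoverability is in place, the proof is a direct transcription of the IR case with $\bar R$ in place of $R$, and I would therefore structure the write-up as ``the proof is identical to that of Theorem~\ref{t:ressubcompact}, replacing $T$ by $L$ and $R$ by $\bar R$,'' pausing only to justify the incomparability and recoverability of $\bar R$ and to spell out the trivial handling of the fixed-$s$ small formats.
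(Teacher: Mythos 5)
Your handling of the three formats $[F_{\x,\y}]$, $[F_{\cdot,\x}]$, $[F_{\x,\cdot}]$ coincides with the paper's proof: replace $T=M_GR$ by $L=M_G\bar R$, observe that $\bar R$ is disjoint from $G$ and consists of mutually incomparable terms (by largeness and minimality, exactly as you argue), and re-run the proof of Theorem~\ref{t:ressubcompact} with $\bar R$ as the set of end markers. That part is fine, and your justification of why $\bar R$ qualifies is even more explicit than the paper's.

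The gap lies in your treatment of the small sub-formats $[F_{s,\x}]$ and $[F_{\x,s}]$, which is precisely the part of this theorem that is genuinely new compared to Theorem~\ref{t:ressubcompact}. You claim they are ``handled with the same argument,'' reducing ``to a one-sided monoidal equality'' to which ``the three-equality bound applies.'' This reduction cannot be carried out: the monoidal argument for $[F_{\cdot,\x}]$ works by uniquely factorizing the ground sides $As_i$ and the value of $\x$ and then cancelling the common end marker, and a small term $s$ admits no such factorization --- that is the very reason these sub-formats were split off. (Reading your phrase ``treating $s$ as a distinguished right-hand side'' as adding $s$ to the set of end markers does not help either, since $s\in S=G\cup R$ may lie in $G$ or be comparable to other markers, destroying the incomparability and disjointness on which unique factorization rests.) The paper's argument for these formats is completely different and degenerate: because the format fixes $s$, two \emph{distinct} equalities $As \doteq Bt\x$ and $As \doteq Bt'\x$ with $t\neq t'$ jointly entail $t\x \doteq t'\x$ by top cancellation, and a ground unifier of $t\x$ and $t'\x$ would have to map $\x$ to a term in $G$ --- impossible, since $\x$ ranges over $L$ and $L\cap S=\emptyset$. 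Hence every conjunction of equalities of format $[F_{s,\x}]$ is $L$-equivalent either to $\False$ or to a single one of its equalities; this yields both decidability and compactness (with bound one, not three). Your stated bound of three is therefore not wrong, but the monoidal route you propose to reach it would fail; the small formats require this separate contradiction argument.
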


\begin{proof}
For equalities of the formats $[F_{\x,\y}],[F_{\x,\cdot}],[F_{\cdot,\x}]$ the proofs are analogous to the
corresponding proofs for Theorem~\ref{t:ressubcompact} where
the set $T$ is replaced with the set $L = M_G\bar R$, i.e., instead of the set $R$ we rely on the set $\bar R$
of unique end marker terms.

Now consider equalities of the format $[F_{s,\x}]$ for a small term $s\in S$.
W.l.o.g.\ let $As \doteq Bt\x$ and $As \doteq Bt'\x$ be two equalities of this format.
If $t \neq t'$, then their conjunction is either contradictory,
or $t\x,t'\x$ have a ground unifier which maps $\x$ to a value from $G$ --- in contradiction
to the assumption that $\x$ takes values from $L$ only.

Therefore, each conjunction of a set $E$ of equalities of the format $[F_{s,\x}]$ either is $L$-equivalent to $\False$
or to a single equality in $E$, and the assertion of the theorem follows.
The same argument also applies for the format $[F_{\x,s}]$.
\end{proof}

\noindent
Given that $L$-subsumption is decidable, at least for equalities of the same format,
and that also $L$-compactness holds,
we define an approximate $T$-subsumption relation $\bigwedge E \impliesSharp\bigwedge E'$ as follows.
Let $E_F$ and $E_F'$ denote the subsets of equalities of format $F$, in $E$ and $E'$, respectively.
Then $\bigwedge E \impliesSharp\bigwedge E'$ holds iff
for all small substitutions $\sigma$,
$\bigwedge \sigma(E_F)$ $L$-subsumes $\bigwedge \sigma(E_F')$ for all formats $F$.
As a consequence of Theorem~\ref{t:gensubcompact}, we obtain:

\begin{theorem}\label{t:genapprsubcompact}
Assume that $n$ is the number of program variables and $m$ is the cardinality of the set $S$ of small terms.
\begin{description}
\item[Approximate $T$-subsumption.]
For finite sets $E,E'$ of two-variable equalities,
it is decidable whether $\bigwedge E$ approximately $T$-subsumes $\bigwedge E'$ or not.
\item[Approximate $T$-compactness.]
Every $T$-satisfiable conjunction of a set $E$ of two-variable equalities
is approximately $T$-subsumed by a conjunction of a subset of at most $\mathcal{O}(n^2 \cdot m^2)$ equalities in $E$.
\end{description}
\end{theorem}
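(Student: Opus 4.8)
The plan is to derive both claims directly from Theorem~\ref{t:gensubcompact}, exploiting that the quantification over small substitutions hidden in the definition of $\impliesSharp$ effectively ranges over only finitely many relevant cases. First I would record the book-keeping fact underlying everything: an equality of a fixed format mentions at most two program variables (two for $[F_{\x,\y}]$, one for each of the remaining formats), and applying a small substitution $\sigma$ only rewrites the occurrences of those variables. Hence for a fixed format $F$ the conjunction $\bigwedge\sigma(E_F)$ depends solely on the restriction of $\sigma$ to the at most two variables occurring in $F$, so at most $(m+1)^2$ distinct restrictions are relevant. Moreover there are only finitely many formats: $\mathcal{O}(n^2)$ of type $[F_{\x,\y}]$, $\mathcal{O}(n)$ of types $[F_{\cdot,\x}],[F_{\x,\cdot}]$, and $\mathcal{O}(mn)$ of types $[F_{s,\x}],[F_{\x,s}]$.

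For the \textbf{decidability} part I would observe that, by definition, $\bigwedge E \impliesSharp \bigwedge E'$ holds iff for each of the finitely many pairs $(F,\sigma)$ the conjunction $\bigwedge\sigma(E_F)$ $L$-subsumes $\bigwedge\sigma(E'_F)$. For a fixed such pair two cases arise. If $\sigma$ instantiates every program variable occurring in $F$ by a small ground term, then $\sigma(E_F)$ and $\sigma(E'_F)$ are conjunctions of ground-template equalities $As\doteq Bt$ with no variable occurrences; for these $L$-subsumption coincides with ordinary subsumption and is decidable by Theorem~\ref{t:solve}. Otherwise at least one program variable survives, so $\sigma(E_F)$ and $\sigma(E'_F)$ remain conjunctions of equalities of a single reduced format, for which $L$-subsumption is decidable by the $L$-subsumption clause of Theorem~\ref{t:gensubcompact}. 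Since there are finitely many pairs $(F,\sigma)$ and each test is decidable, approximate $T$-subsumption is decidable.

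For the \textbf{compactness} part I would construct the required subset $E_0\subseteq E$ by iterating over all relevant pairs $(F,\sigma)$. For each such pair, $\bigwedge\sigma(E_F)$ is, by (the proof of) Theorem~\ref{t:gensubcompact}, $L$-equivalent to a subset of at most three of its equalities --- in the $L$-satisfiable case this is the compactness clause, and in the $L$-unsatisfiable case the same analysis (mirroring Corollary~\ref{c:compact}) exhibits at most three equalities whose conjunction is already $L$-unsatisfiable. I add the at most three preimages in $E_F$ of these equalities to $E_0$. By construction, for every pair $(F,\sigma)$ the conjunction $\bigwedge\sigma((E_0)_F)$ $L$-subsumes $\bigwedge\sigma(E_F)$: when $\sigma(E_F)$ is $L$-satisfiable the selected equalities already $L$-subsume it, and when it is $L$-unsatisfiable the selected witness makes $\sigma((E_0)_F)$ $L$-unsatisfiable as well. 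Hence $\bigwedge E_0 \impliesSharp \bigwedge E$, and since $E_0\subseteq E$ this is the claimed approximate $T$-subsumption of $\bigwedge E$ by a subset of $E$.

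It then remains to bound $|E_0|$. Each relevant pair contributes at most three equalities, so it suffices to count the pairs. The dominating contribution comes from the formats $[F_{\x,\y}]$: there are $\mathcal{O}(n^2)$ of them, each with $(m+1)^2=\mathcal{O}(m^2)$ relevant restrictions of $\sigma$, giving $\mathcal{O}(n^2m^2)$ equalities; the remaining formats contribute only $\mathcal{O}(m^2 n)$ equalities in total, which is dominated. Thus $|E_0|=\mathcal{O}(n^2m^2)$, as claimed. The step I expect to require the most care is the uniform treatment of the $L$-\emph{unsatisfiable} pairs $(F,\sigma)$: since approximate $T$-subsumption of an $L$-unsatisfiable $\sigma(E_F)$ forces $\sigma((E_0)_F)$ to be $L$-unsatisfiable as well, I must ensure that Theorem~\ref{t:gensubcompact} (equivalently, the argument behind Corollary~\ref{c:compact}) also yields a bounded unsatisfiable witness for every format and every small substitution, including the degenerate cases where a small substitution collapses a two-variable equality into a ground or one-variable one.
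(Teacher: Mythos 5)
Your overall strategy is the paper's own: reduce approximate $T$-subsumption and $T$-compactness to finitely many pairs of a format $F$ and a small substitution $\sigma$ restricted to the at most two variables of $F$, settle the fully instantiated pairs by the ground-equality results (Theorem \ref{t:solve}, Corollary \ref{c:compact}) and the uninstantiated pairs by Theorem \ref{t:gensubcompact}; the paper merely organizes the same case analysis recursively (two variables, then one variable, then ground --- which is where its counts $2m+3$ and $4m^2+6m+3$ come from) instead of your flat enumeration over pairs. However, there is a genuine flaw in your treatment of the \emph{partially} instantiated pairs: you claim that if $\sigma$ instantiates only one of the two variables of $F=[F_{\x,\y}]$, then $\sigma(E_F)$ ``remains a conjunction of equalities of a single reduced format.'' That is false. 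An equality $As\x\doteq Bt\y$ becomes $Asc\doteq Bt\y$, and whether this has format $[F_{s',\y}]$ for a small term $s'=sc$ or format $[F_{\cdot,\x}]$-style with a large ground side depends on the individual $s$. The paper's own Example \ref{e:general} exhibits exactly this: under the substitution $[a/\x]$, the equality $A\x\doteq B\y$ falls into format $[F_{a,\y}]$ while $Af(\x,a,\x)\doteq Bf(\y,a,\y)$ falls into $[F_{\cdot,\y}]$, and the example explicitly notes that the first equality ``must be kept separately.'' Consequently Theorem \ref{t:gensubcompact}, which applies only to conjunctions of equalities \emph{of the same format}, cannot be invoked for such a pair $(F,\sigma)$ as a whole --- neither to decide $L$-subsumption nor to obtain your bound of three equalities per pair.

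The gap is repairable, and the repair is precisely what the paper's nested construction does: within each partially instantiated pair, split once more into the sub-formats $[F_{s',\y}]$, $s'\in S$, and $[F_{\cdot,\y}]$ (together with the ground instances arising when $\y$ is also substituted), and handle each piece by Theorem \ref{t:gensubcompact} resp.\ Corollary \ref{c:compact}. Since per-piece subsumption is preserved under conjunction, the union of the selected equalities still does the job; a partially instantiated pair then contributes ${\cal O}(m)$ equalities rather than three (the paper charges $2m+3$ via its one-variable case), and the grand total remains ${\cal O}(n^2\cdot m^2)$. The same refinement is what makes the subsumption test effective: checking $L$-subsumption per reduced format is a decidable sufficient condition for the $L$-subsumption demanded in the definition of $\impliesSharp$, i.e., it yields a slightly coarser but still sound and compact approximate relation, which is also how the paper's decidability claim must be read. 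Your closing concern about bounded unsatisfiable witnesses is legitimate but minor by comparison: such cores exist in every format and only affect the constants.
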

\begin{proof}
In the following we consider equalities of formats which contain either one or two program variables.
\begin{description}
\item[One program variable.]
Let $E'$ denote a subset of equalities of $E$ of the same format which contains only the program variable $\x$.
Then for every $c \in S$ we construct a subset $E'_c \subseteq E'$
such that $\bigwedge E'_c[c/\x]$ $T$-subsumes $\bigwedge E'[c/\x]$. Furthermore, we construct a subset $E'_L \subseteq E'$ which $L$-subsumes $E'$.
Then the conjunction of $\bigcup_{c \in S} E'_c \cup E'_L$
$T$-subsumes the conjunction of $E'$.

For each set $E'_c$ we require at most two equalities (according to Corollary~\ref{c:compact})
while for the set $E'_L$ we require at most three equalities
(according to Theorem~\ref{t:gensubcompact}).
Thus, overall, at most $2m+3$ equalities are required.
\item[Two program variables.]
Let $E'$ denote a subset of equalities of $E$ of format $[F_{\x,\y}]$ which contains only the distinct program variables $\x,\y$.
We proceed as follows.
\begin{enumerate}
\item
For every $c\in S$, we construct a set $E'_{c,\y} \subseteq E'$
such that $\bigwedge E'_{c,\y}[c/\x]$ $T$-subsumes $\bigwedge E'[c/\x]$.
\item
For every $c\in S$, we construct a set $E'_{\x,c} \subseteq E'$
such that $\bigwedge E'_{\x,c}[c/\y]$ $T$-subsumes $\bigwedge E'[c/\y]$.
\item
Finally, we construct a set $E'_L \subseteq E'$ such that
$\bigwedge E'_L$ $L$-subsumes $\bigwedge E'$.
\end{enumerate}
Then the conjunction of
$\bigcup_{c\in S} E'_{\x,c} \cup E'_{c,\y} \cup E'_L$
$T$-subsumes the conjunction of $E'$.

For each set $E'_{\x,c}$ resp. $E'_{c,\y}$ we require at most $2m+3$ equalities.
While for the set $E'_L$ we require at most three equalities (according to Theorem~\ref{t:gensubcompact}).
Thus, overall, at most $4m^2 + 6m + 3$ equalities are required for $E'$.
\end{description}
For each program variable $\x$ we distinguish between $2m+3$ different formats ($[F_{\x,s}]$, $[F_{s,\x}]$, $s\in S$,
and $[F_{\x,\x}]$,$[F_{\x,\cdot}]$, and $[F_{\cdot,\x}]$) of equalities.
While for two distinct program variables we only have one format $[F_{\x,\y}]$ of equalities.
Hence we conclude that every conjunction $E$ is $T$-subsumed by a conjunction
of a subset of $E$ which contains at most
\[
n \cdot (2m+3) \cdot (2m+3) + n \cdot (n-1) \cdot (4m^2 + 6m + 3) \hfill\in\hfill \mathcal{O}(n^2 \cdot m^2)
\]
equalities.
This completes the proof.
\end{proof}
 
Due to Theorem~\ref{t:genapprsubcompact}, representations of the greatest solutions
of the constraint systems {\bf S} and {\bf R} can be effectively computed.
By that, we arrive at our main result:
\begin{theorem}\label{t:general}
  Assume that all right-hand sides of assignments of a program contain at most one variable.
  Then all valid inter-procedurally two-variable Herbrand equalities can be inferred.
\end{theorem}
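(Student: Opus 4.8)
The plan is to assemble the machinery of Section~\ref{s:general} in exactly the same way as in the proof of Theorem~\ref{t:restricted}, only replacing the IR-specific Corollary~\ref{c:ressubcompact} by its unrestricted counterpart, Theorem~\ref{t:genapprsubcompact}. Recall that for an arbitrary program all run-time values of program variables lie in $T = M_G R$, so it suffices throughout to reason up to $T$-equivalence; this is precisely the relativization under which Theorem~\ref{t:genapprsubcompact} operates. First I would argue that the greatest solutions of the constraint systems {\bf S} and {\bf R} are effectively computable up to $T$-equivalence. Since there are only finitely many generic post-conditions --- namely $A\x\doteq C$ and $A\x\doteq B\y$ for the finitely many program variables $\x,\y$ --- each {\bf WP} transformer is determined by the finitely many conjunctions it produces on these post-conditions, so it suffices to represent each such conjunction finitely.

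Next I would run the standard greatest-fixpoint (e.g.\ Round-Robin) iteration over {\bf S} and {\bf R}. At every step, approximate $T$-compactness from Theorem~\ref{t:genapprsubcompact} lets me replace each accumulated conjunction by an approximately $T$-subsuming subset of at most ${\cal O}(n^2\cdot m^2)$ equalities, so all intermediate conjunctions stay finite and of bounded size. Because approximate $T$-subsumption \emph{implies} $T$-subsumption, discarding the remaining equalities loses no $T$-semantic information, i.e.\ the finite representative is $T$-equivalent to the full conjunction. Decidability of approximate $T$-subsumption (again Theorem~\ref{t:genapprsubcompact}) then lets me test, after each round, whether the iterate has stabilized; the uniform size bound guarantees that this happens after finitely many rounds. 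Hence I obtain effective representations of $\semT{s_p}$ for every procedure $p$ and of $\semR{u}$ for every program point $u$.

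Finally I would extract the invariants from $\semR{u}$ exactly as in Section~\ref{s:wp}. For two distinct program variables $\x,\y$, I apply $\semR{u}$ to the generic post-condition $A\x\doteq B\y$ and form the universal closure over all program variables; the result is a finite conjunction $\bigwedge_i (As_i\doteq Bt_i)$ of ground equalities. By Theorem~\ref{t:solve}(3) the set of template solutions $(A,B)=(r_1,r_2)$ is explicitly computable, and each solution yields a valid equality $r_1\x\doteq r_2\y$. Symmetrically, applying $\semR{u}$ to $A\x\doteq C$ and universally closing gives either $\False$ (no constant) or a conjunction over $C$ alone, from which Theorem~\ref{t:onevar} and Corollary~\ref{c:onevar} read off the constant value of $\x$, if any.

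The main obstacle is justifying that running the fixpoint iteration with the \emph{approximate} relation $\impliesSharp$ in place of exact subsumption still computes the true greatest solution modulo $T$-equivalence. This rests on two facts established in Section~\ref{s:general}: soundness of the approximation (approximate $T$-subsumption implies $T$-subsumption), which ensures that shrinking a conjunction to its finite representative never changes its $T$-semantics; and approximate $T$-compactness, which furnishes the uniform bound ${\cal O}(n^2\cdot m^2)$ on representative size and thereby forces the descending chain of iterates to stabilize. Care is also needed because subsumption is lifted to the function level and must hold format-by-format and under every small substitution $\sigma$; but this is exactly how $\impliesSharp$ was defined, so the per-format guarantees of Theorem~\ref{t:gensubcompact} suffice. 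With both properties in hand, the coarsening to $T=M_GR$ is harmless --- every inferred equality is valid for all run-time values --- and the construction is complete.
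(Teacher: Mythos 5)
Your proposal is correct and follows essentially the same route as the paper: the paper's proof of Theorem~\ref{t:general} is exactly the proof of Theorem~\ref{t:restricted} with Corollary~\ref{c:ressubcompact} replaced by Theorem~\ref{t:genapprsubcompact}, i.e., effective computation of the greatest solutions of {\bf S} and {\bf R} via approximate $T$-compactness and decidable approximate $T$-subsumption, followed by extraction of the invariants from $\semR{u}(A\x\doteq B\y)$ and $\semR{u}(A\x\doteq C)$ via universal closure and Theorem~\ref{t:solve}. Your additional discussion of soundness of $\impliesSharp$ and of fixpoint stabilization merely spells out details the paper leaves implicit.
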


\noindent
The proof is analogous to the proof of Theorem~\ref{t:restricted} --- only that Theorem~\ref{t:genapprsubcompact}
is used instead of Corollary~\ref{c:ressubcompact}.

\begin{example}\label{e:general}
Consider a variant of the program from Section~\ref{s:programs}
where the non-ground assignments are given by:
\[
\x \Let f(\x,a,\x)\quad\text{and}\quad
\y \Let f(\y,a,\y)
\]
The set of small terms then is given by $S=\{a\}$, while the set of smallest
large terms is given by $\bar R=\{f(a,a,a)\}$.

Now consider the constraint system {\bf R} for the recursive procedure $p$
as defined by the control flow graph of Figure~\ref{f:prog} with the modified assignments.
Let us concentrate on the start point $4$ of $p$.
Round-Robin iteration for the transformer $\semT{4}$ for the generic post-condition $A\x\doteq B\y$,
successively will produce the equalities depicted by Table~\ref{t:RRiter2},
\begin{table}
\caption{\label{t:RRiter2}Round-Robin iteration of Example~\ref{e:general}}
\centerline{$\begin{array}{|l||lll|lll|lll|}
\hline
        & &1& \  & & 2 & \ & & 3 & \    \\
\hline\hline
7       & A\x&\doteq &B\y       &&&             \       &&&     \ \\
\hline
6       & A\x&\doteq &B f(\y,a,\y)&&& \       &&&     \ \\
\hline
5       & &\True& \     &               A\x&\doteq &B f(\y,a,\y)&
                                        Af(\x,a,\x)&\doteq &B f(f(\y,a,\y),a,f(\y,a,\y)) \\
\hline
4       & A\x&\doteq &B\y &             Af(\x,a,\x)&\doteq &B f(\y,a,\y)&
                                        Af(f(\x,a,\x),a,f(\x,a,\x))&\doteq &
                                        B f(f(\y,a,\y),a,f(\y,a,\y))\\
\hline
\end{array}$}
\end{table}
where in the $i$th column, we again only have displayed pre-conditions
which have additionally been attained in the $i$th iteration for the program points
$7,6,5$ and $4$, respectively.
For program point $4$, we can argue as in Example~\ref{e:prog2}
in order to verify that the first two equalities $L$-subsume the third one.
Therefore, it remains to consider the given iteration for any small assignment
to the program variables $\x,\y$.

If $\x=\y=a$, then $A=B$ must hold and the third equality is implied.
If $\x=a$, but $\y$ is bound to large terms, then the first equality is of the format
$[F_{a,\y}]$ while the subsequent equalities are of the format $[F_{\cdot,\y}]$.
Accordingly, the first equality must be kept separately.
For the second and third equalities the techniques from Theorem~\ref{t:gensubcompact}
again allow to derive the monoidal equality:
\[
Af(\bullet,a,\bullet)A^{-1} \doteq Bf(\bullet,a,\bullet)B^{-1}
\]
implying that the equality provided in the fourth iteration will be subsumed.
A similar argument applies to the case where $\y=a$ while $\x$ is bound to large values only.
Thus, Round-Robin fixpoint iteration reaches the greatest fixpoint after the fourth iteration.
\qed
\end{example}

\section{Multi-variable Equalities}\label{s:addon}

In this section, we extend our methods to arbitrary equalities such as
\[
\x \doteq f(g\y,\z)
\]
where, w.l.o.g., the left-hand side is a plain program variable while the right-hand side
is a term possibly containing occurrences of more than one variable.
Still, we consider programs where each right-hand side of an assignment contains occurrences
of at most one variable only.
Here, we indicate how for any program point $v$ and any given candidate Herbrand equality $\x \doteq t$, we
verify whether or not the equality is valid whenever $v$ is reached.
There are only constantly many candidate equalities of this form, namely, all equalities which
hold for a variable assignment $\sigma_v$ computed by a single run of the program reaching $v$.
Since such a single run can be effectively computed before-hand, we conclude:
\begin{theorem}
  Assume that all right-hand sides of assignments of a program contain at most one variable.
  Then all inter-procedurally valid Herbrand equalities can be inferred.
\end{theorem}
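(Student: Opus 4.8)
The plan is to reduce the inference of \emph{all} valid Herbrand equalities to deciding validity of finitely many candidate equalities of the normal form $\x\doteq s$, and to decide each such candidate by a weakest-pre-condition computation that re-uses the summary machinery of the previous sections. As the statement already assumes the left-hand side to be a plain program variable, I would first note that a query $t_1\doteq t_2$ with $t_1$ non-trivial is a congruence-consequence of equalities of the form $\x\doteq s$ together with variable--variable equalities: one peels off common top operators until one side is a variable, and when the heads differ and neither side is a variable the equality is never valid at a reachable point. Since variable--variable equalities are already delivered by Theorem~\ref{t:general}, the task reduces to: for a fixed program point $v$, compute the set of all valid equalities $\x_i\doteq s$.

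Second, I would fix $v$ and exploit reachability. Since every program point is reachable and no procedure is definitely non-terminating, a single terminating execution path from the program entry to $v$ can be computed, resolving non-deterministic branches arbitrarily and instantiating every non-deterministic assignment $\x =\null ?$ by a concrete ground term (most conveniently a fresh atom, to avoid accidental coincidences). This yields a variable assignment $\sigma_v$ with $\sigma_v(\x_j)=t_j\in\T_\Omega$ for every program variable. Any equality valid at $v$ holds in particular along this run, hence is satisfied by $\sigma_v$; so every valid candidate $\x_i\doteq s$ obeys $\sigma_v(s)=t_i$. Because applying $\sigma_v$ only substitutes terms into variable leaves, the operator skeleton of $s$ embeds into $t_i$, so $s$ must arise from $t_i$ by replacing some occurrences of subterms equal to some $t_j$ by the variable $\x_j$. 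There are only finitely many such $s$, effectively enumerable from $t_i$ and the $t_j$ --- this is the promised constant (w.r.t.\ the input) set of candidates.

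Third, each candidate $e:\x_i\doteq s$ is tested for validity by computing its weakest pre-condition $\semR{v}(e)$ at the program entry and checking whether it equals $\True$ under the trivial entry condition. The assignment rules $\phi[t/\x]$ and $\forall\,\x.\,\phi$ are defined for arbitrary equalities, so the only missing ingredient is a \emph{finite} procedure summary for the fixed post-condition $e$. I would obtain it by the same greatest-fixpoint iteration over the constraint systems {\bf S} and {\bf R}, restricted to the finitely many candidate post-conditions and their WP-descendants, made to terminate by generalizing the compactness and approximate-$T$-subsumption results from two occurring variables to the several variables of $e$.

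The main obstacle is precisely this last generalization: a multi-variable post-condition is \emph{not} captured by the generic templates $A\x\doteq B\y$, and naive backward propagation through a recursive procedure can produce conjunctions of unboundedly growing equalities. I would tame it by applying the unique-factorization result of Theorem~\ref{t:unique} coordinate-wise, once per variable occurring in $e$, so that the monoidal cancellation and factorization laws again yield a compactness bound; approximate $T$-subsumption then caps the retained equalities and makes termination effectively detectable. I would additionally expect that in this program class genuinely multi-variable valid equalities are structurally degenerate --- each surplus variable being pinned to a constant or to another variable, since a one-variable right-hand side cannot make $\x$ depend on two independently varying quantities at once --- which, if made precise, collapses the verification of $e$ to the already-solved two-variable-plus-constant analysis and explains why candidate enumeration alone suffices. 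Establishing either this structural reduction or the coordinate-wise compactness bound is the crux; given it, validity of every candidate, and hence the full set of valid Herbrand equalities at $v$, is computed.
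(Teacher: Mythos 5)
Your overall skeleton matches the paper's: enumerate finitely many candidates $\x\doteq s$ from a single concrete run reaching $v$ (any valid equality must be satisfied by the run's variable assignment $\sigma_v$, so $s$ arises from $\sigma_v(\x)$ by replacing subterm occurrences $\sigma_v(\y_j)$ with $\y_j$), then verify each candidate by a weakest pre-condition computation. However, the step you yourself flag as ``the crux'' --- making the {\bf WP} computation for a multi-variable post-condition terminate with finite summaries --- is exactly the content the theorem needs, and your proposal does not supply it. Neither of your two suggested routes is established: the ``coordinate-wise'' application of Theorem \ref{t:unique} is not an argument (the factorization theorem factors ground terms, and it is not clear what a per-variable factorization of an equality $\y\doteq p'$ with several template variables would even be, nor why it would yield a compactness bound), and the ``structural degeneracy'' claim that every surplus variable is pinned to a constant or another variable is a conjecture you do not prove and that the paper never needs.

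What the paper actually does at this point is different and more concrete: for a candidate $\x\doteq s$ with variables $\y_1,\dotsc,\y_r$, it introduces \emph{new generic post-conditions} $\x'\doteq s'$, where $s'$ ranges over subterms of $s\sigma$ for all substitutions $\sigma$ mapping each $\y_i$ to either a fresh ground template variable $C_i$ or an application $A_i\y_i'$ of a fresh template variable to a program variable. It then observes that every conjunct produced by {\bf WP} steps simplifies to one of two shapes: (a) an equality $s\doteq C_i$ or $s\doteq A_it_i$ with at most one program variable on each side, which is handled by the existing two-variable machinery (simpler even, since only one template variable occurs); or (b) an equality $\y\doteq p'$ with a plain program variable on the left and an instantiated subterm of the original pattern on the right. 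For shape (b), equalities are grouped by the left-hand variable and the constructor skeleton outside the template variables, and \emph{one representative per group suffices}: conjoining two equalities of the same group lets the outer constructors cancel, reducing the pair to a finite conjunction of two-variable equalities of shape (a). This grouping argument --- not a new factorization or compactness theory for several variables --- is what bounds the occurring conjunctions and reduces the whole verification to Section \ref{s:general}. Without it (or some worked-out substitute), your proof does not go through.
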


Now consider the single Herbrand equality $\x \doteq t$,
where $t$ contains occurrences of the program variables
$\y_1,\dotsc,\y_k$.
Then we construct new generic post-conditions as follows. First, we consider all substitutions $\sigma$
which map each variable $\y_i$ in $t$ either to a fresh template variable $C_i$ or an expression
$A_i\y'_i$ for a fresh template variable $A_i$ and any program variable $\y'_i$.
Then the new generic post-conditions are of the form $\x' \doteq t'$ where $\x'$ is any program variable,
and $t'$ is a subterm of $t\sigma$.
Note that this set may be large but is still finite.
In a practical implementation, we may, however, tabulate for each procedure the weakest pre-conditions
only for those post-conditions which are really required.
Since we envision that for realistic programs, only few of these equalities for each procedure
will be necessary to prove the queried assertion $e$ at target point $u$, the potential exponential blow-up
will still be not an obstacle.

\begin{example}
Assume the equality we are interested in is $\x \doteq f(g\y,\z)$, then, e.g.,
\[
\x\doteq f (gA_1\y, A_2\z)\qquad
\y\doteq f (gA_1\x, A_2\z)
\]
are new generic post-conditions to be considered, as well as
\begin{flalign*}
&& \z\doteq f(g C, A\y) \qquad \y\doteq f(g A\z, C) &&\null\qEd
\end{flalign*}
\end{example}

Starting from a new generic post-condition $\x \doteq p$, repeatedly computing weakest pre-conditions w.r.t.\ assignments
may result in conjunctions of equalities which can be simplified to one of the following forms:
\begin{itemize}
\item $s \doteq C_i$ or $s\doteq A_i t_i$ where $s$ and $t_i$ contain occurrences of at most one program variable each;
\item $\y \doteq p'$, i.e., the left-hand side is a plain program variable, and the right-hand side $p'$ is obtained
      from a subterm of $p$ by substituting each occurrence of a program variable
      $\y_i$ with some term $t_i$ containing occurrences of at most one program variable each.
\end{itemize}
\begin{example}
  Consider, e.g., the generic post-condition $\x \doteq f(gA_1\y,A_2\z)$. Then
  \[
  \begin{array}{l@{\quad}l@{\quad}l}
  \semT{\x \Let f(\x,h\x)}(\x \doteq f(gA_1\y,A_2\z))	&=&f(\x,h\x)\doteq f(gA_1\y,A_2\z) \\
    &=&(\x\doteq gA_1\y)\land(h\x\doteq A_2\z)
  \end{array}
  \]
  which means that we equivalently obtain two two-variable equalities.
  Likewise, for an assignment to one of the program variables on the right, we
  have:
  \[
  \begin{array}{l@{\quad}l@{\quad}l}
    \semT{\y \Let f(b,\y)}(\x \doteq f(gA_1\y,A_2\z)) &=&\x\doteq f(gA_1f(b,\y),A_2\z) \\
  \end{array}
  \]
  which is an equality of the form described in the second item.
  \qed
\end{example}
The equalities from the first item contain at most one program variable on each side.
They can be dealt with in the same way as we did for plain two-variable equalities.
They are even somewhat simpler, in that only one template variable occurs (instead of two).
The equalities of the second item, on the other hand, we may group into equalities which agree
in the variable on the left as well as in the constructor applications outside the template variables $A_i$.
Of each such group it suffices to keep exactly one equality.
Any conjunction with another equality from the same group will allow us to
simplify the second equality to a conjunction of equalities with at most one
program variable on each side.

\begin{example}
  Assume that we are given the conjunction of the two equalities:
  \[
  \x \doteq f(gA_1\y,A_2\z) \qquad \x \doteq f(gA_3h\y,A_4g\z)
  \]
  This conjunction is equivalent to the first equality together with:
  \[
  f(gA_1\y,A_2\z) \doteq f(gA_3h\y,A_4g\z)
  \]
  The latter equality, now, is equivalent to the conjunction of:
  \[
  A_1\y\doteq A_3h\y \qquad A_2\z\doteq A_4g\z
  \]
  which is a finite conjunction of two-variable equalities.
  \qed
\end{example}

Thus, in the course of {\bf WP} computation for any of the new generic post-conditions,
we obtain conjunctions which (up to finitely many exceptions) consists of
two-variable equalities only,
to which we can apply our methods from Section~\ref{s:general}.
In summary, we thus find that it can be effectively verified whether or not a general Herbrand equality
is inter-procedurally valid at a given program point $v$.

\section{Analysis of Computational Complexity}\label{s:polytime}

In the following we indicate how our algorithms for inferring inter-procedurally
valid Herbrand equalities can be realized in polynomial time.
Crucial for the complexity is the size of representations of occurring terms.
Note that already the factorization of a term results in a succinct representation by sharing isomorphic subtrees.
Still, the \emph{depth} of occurring terms may grow exponentially in a program with procedures.
\begin{example}
  Consider the following program fragment consisting of procedures $p_n$ and two global variables $\x$ and $\y$:
  \[\begin{array}{ll}
  p_i           & \{\; p_{i-1}();\; p_{i-1}(); \;\} \\[.5ex]
  p_0           & \{\; \x \Let f(\x,\x);\; \y \Let f(\y,\y); \;\}
  \end{array}\]
  The weakest pre-condition of a generic post-condition $A\x \doteq B\y$ for a procedure $p_n$ is then given by a single
  equality $Af(\bullet,\bullet)^{2^n}\x \doteq Bf(\bullet,\bullet)^{2^n}\y$ with exponentially deep terms on both sides of the equality.
  \qed
\end{example}
Hence, in order to arrive at polynomial algorithms,
polynomially sized representations must be provided for all occurring terms which additionally support the
required operations on terms in polynomial time.
For trees, \emph{tree straight-line programs} (TSLP, for short) have been proposed
which efficiently represent trees by context-free \emph{tree} grammars~(see \cite{Schauss13,Lohrey15} for recent overviews). 
Polynomial algorithms for equality of the represented trees, however, are only known in case that the tree grammars
in question are \emph{linear} --- meaning that each parameter of a rule occurs in the corresponding right-hand side at most once.
Our factorizations of trees, however, may easily introduce \emph{non-linear} terms.
Therefore, we apply compression only to elements from the free monoid $M_G$. 
We use ordinary straight-line programs (SLP for short) --- but
with the understanding that individual letters are irreducible trees.
For plain symbols (corresponding to unary constructors only), 
algorithms based on such a representation have been sketched in~\cite{Gulwani07}.
Thus in our application, an SLP $P$ of \emph{size} $k$ consists of a sequence of definitions
\[
X_i \SLPassign \alpha_i \qquad i=1,\dotsc,k
\]
where either $k=1$ and $\alpha_i=\bullet$, or each right-hand side $\alpha_i$ is either of
the form $X_j X_l$ for unknowns $X_j, X_l$ with $i<j,l$ or a single irreducible term $t\in M_G$.
Given a suitable ordering on the unknowns together with an initial unknown, we may consider $P$ also as a
\emph{set} of definitions of unknowns.
Beyond the size, we are also interested in the \emph{depth}, i.e., the length $h$ of the longest chain of unknowns
$Y_1,\ldots,Y_{h}$ in $P$
such that $Y_1\SLPassign\alpha_1 Y_2\alpha'_1,\dotsc,Y_{h-1}\SLPassign\alpha_{h-1}Y_h\alpha'_{h-1}$ occur among the definitions in $P$
for suitable $\alpha_i,\alpha'_i$.
An SLP can also be considered as a context-free grammar (in Chomsky Normal Form) generating a single term in $M_G$. Formally,
the term $\sem{P}$ represented by $P$ is defined by $\sem{P} = \sem{X_1}_P$ where
\begin{align*}
  \sem{X_i}_P &= \sem{X_j}_P \sem{X_l}_P && (X_i \SLPassign X_j X_l) \in P \\
  \sem{X_i}_P &= t && \text{$(X_i \SLPassign t) \in P$ and $t \in M_G$} \\
  \intertext{We remark that in linear time in the size of $P$, we can determine the \emph{length}
  of the represented element in $M_G$, which is defined by:}
  \length{X_i}_P &= \length{X_j}_P + \length{X_l}_P && (X_i \SLPassign X_j X_l) \in P \\
  \length{X_i}_P &= 0 && \text{$(X_i \SLPassign \bullet) \in P$} \\
  \length{X_i}_P &= 1 && \text{$(X_i \SLPassign t) \in P$ and $t \in M_G\setminus\{\bullet\}$}
\end{align*}
An SLP in Chomsky normal form of size $k$ cannot produce a word
larger than $2^k$. Therefore, the length of each word which it generates can be described
by $k$ bits. For such numbers, basic operations
as equality and addition can be done in linear time in $k$.

In order to avoid repeated computation of lengths, we assume that every unknown occurring during the
analysis will once for all be annotated with its length.
For later use, we collect a set of basic algorithms for SLPs (see, e.g.,~\cite{Lohrey12}).
\begin{theorem}\label{t:termops}
The following tasks can be realized in polynomial time:
\begin{enumerate}
\item\label{termop1}
  Given an SLP $P$ representing a term $t\in M_G$.
  Determine an SLP $Q$ for the reverse of $t$ such that $Q$ has the same size and depth as $P$.
\item\label{termop2}
  Given an SLP $P$ representing a term $t\in M_G$ of some length $k$, and some number $0\leq h\leq k$.
  Determine an SLP $Q$ for the prefix (suffix) of $t$ of length $h$. The number of new
  definitions in $Q$ is bounded by the depth of $P$, and the depth of $Q$ is not increased.
\item\label{termop3}
  Given SLPs $P$ and $Q$ for terms $t,t'\in M_G$. Determine whether or not $t=t'$.
\item\label{termop4}
  Given SLPs $P$ and $Q$ for terms $t,t'\in M_G$. Determine the length of the longest common
  prefix (suffix) of $t,t'$.
\item\label{termop5}
  Given SLPs $P$ and $Q$ for terms $t,t'\in M_G$. Determine an SLP for $tt'$. At most one
  new definition is introduced and also the depth is increased at most by one.
\end{enumerate}
\end{theorem}
\begin{proof}
An SLP for the reverse of $t$ is obtained from $P$ by introducing a fresh copy of unknowns $X'$ for every unknown $X$ in $P$
together with a definition $X'\SLPassign f$ if $X\SLPassign f$ with $f\in M_G$, and a definition $X'\SLPassign Z'Y'$ if $P$ has a definition
$X\SLPassign YZ$. This new SLP clearly generates the reverse of the SLP $P$ --- proving assertion~\ref{termop1}.

For a proof of assertion~\ref{termop2}, we only consider the construction of an SLP for the prefix of $t$ of length $h$.
The case where $h=0$ is trivial. Therefore, assume that $h>0$.
We construct the new SLP by successively introducing fresh unknowns $X'$ for the unknowns $X$ on a path in $P$.
in order to do so, we maintain the sum of the lengths $l$
of the unknowns to the left of the path.
We start with the initial unknown $X_1$ of $P$ where $l=0$ with corresponding fresh unknown $X'_1$.
In general, assume that $l < h$, and we have reached an unknown $X$ with corresponding fresh unknown $X'$.
First assume that the
definition of $X$ in $P$ is given by $X\SLPassign f$ for some irreducible term $f\in M_G$.
In this case, $h = l+1$, and we set the definition of $X'$ to $X'\SLPassign f$.
Then assume that the definition of $X$ in $P$ is given by $X\SLPassign Y Z$.
If $h\leq l+\length{Y}_P$, then we introduce a fresh copy $Y'$ for $Y$ and the definition $X'\SLPassign Y'$ for $X'$,
and proceed with $Y'$.
If $l+\length{Y}_P < h$, then we introduce a fresh copy $Z'$ for $Z$ and the definition $X'\SLPassign Y Z'$ for $X'$
and proceed with $Z'$.
The resulting set of definitions, though, may not meet our assumptions on SLPs. The definitions with single unknowns in
their right-hand sides, can however, be removed in polynomial time by a technique similar to the removal of chain rules
in context-free grammars.

Polynomial time algorithms for deciding equivalence of SLPs were independently discovered by
Hirshfeld et al.~\cite{Hirshfeld96}, Mehlhorn et al.~\cite{Mehlhorn97}, and Plandowski~\cite{Plandowski94}
proving assertion~\ref{termop3}.
The algorithms can be applied to obtain a polynomial time algorithm for determining the length of longest common
prefixes of elements in a free monoid as claimed in assertion~\ref{termop4}. First, the algorithm from assertion~\ref{termop3} can be extended to decide
whether or not $t$ is a \emph{prefix} of $t'$ by first determining the lengths $h$ and $h'$ of $t$ and $t'$, respectively. If
$h > h'$, $t$ is not a prefix of $t'$.
Otherwise, we may determine an SLP $Q'$ of $Q$ representing the prefix of $t'$ of length $h$ which then is checked for
equivalence with $P$.
In the next step, that algorithm is extended to the case where $t$ is not necessarily a prefix of $t'$
by performing binary search on the prefixes of $t$.

Finally, consider assertion~\ref{termop5}.
If $t$ or $t'$ equals $\bullet$, the concatenation is trivial.
So assume that neither $t$ nor $t'$ equal $\bullet$, and that the
initial unknowns of the SLPs $P$ and $Q$ equal $X_1$ and $Y_1$, respectively.
Let $X_0$ denote a fresh unknown. Then
the term $tt'$ can be represented by the SLP $P\cup Q$ together with the initial definition $X_0\SLPassign X_1 Y_1$.
\end{proof}

The size of a term $t \in \T_\Omega(\X) \cup \T_\Omega(\bullet)$ is given by $\sizeT{t}$ which is recursively defined as follows:
\[\begin{array}{lcl@{\qquad}l}
    \sizeT{t} &=& 1 + \Sigma_{i=1}^k \sizeT{t_i} & \text{if $t = f(t_1,\dotsc,t_k)$ and $f \in \Omega_k$} \\
    \sizeT{t} &=& 1 & \text{if $t \in \X \cup \{\bullet\}$}
\end{array}\]
In the following we define the size of a program. As mentioned in Section~\ref{s:programs} we
do not operate on the syntax of a program directly but on the corresponding control flow graph.
The size of a program is then given as the sum of the number of nodes, the number of edges, and
the sum of the sizes of terms of right-hand sides of assignments.

A non-ground term $t=t'\x$ containing occurrences of the variable $\x$ is then succinctly represented by the pair $(P,\x)$ where
$P$ is an SLP for $t'$.
Ground terms in $T$ may be factorized differently for initialization-restricted or unrestricted programs.
In the following, we first consider initialization-restricted programs, and subsequently unrestricted programs.

\subsection{Polynomial-time Algorithms for IR Programs}\label{ss:ir_ptime}

For \emph{initialization-restricted} programs,
every ground term $t$ possibly produced at run-time, can be uniquely
factored into $t = t'r$ for $t'\in M_G$ and a ground term $r \in R$ occurring as a right-hand side in the program.
Such a term $t$ is represented by a pair $(P,r)$ where $P$ is an SLP for $t'$.
We remark that the size of the term $r$ is bounded by the size of the program.

In a succinct representation of a post-condition $\phi$, every occurring term in $T \cup T'$
(recall that $T = M_G R$ and $T' = M_G \X$)
is represented by such a pair where the different SLPs need not necessarily be disjoint but may share
unknowns together with their definitions.
The weakest pre-condition of a post-condition $\phi$ w.r.t.\ a non-ground assignment $\x \Let t\y$ is given as $\phi[t\y/\x]$.
This means that $t\y$ must be substituted into each term $s\x$, $s \in M_G$ occurring in $\phi$.
If $s$ or $t$ equals $\bullet$, the substitution is trivial.
So assume that neither $s$ nor $t$ equal $\bullet$.
Then by Theorem~\ref{t:termops} an SLP $P$ for $st$ can be constructed from the SLPs for $s$ and $t$
by adding one fresh unknown together with its definition, so that
the depth of the involved SLPs increases at most by one --- even if the depth of the resulting term may be doubled.
The resulting term of the substitution is then represented by the pair $(P,\y)$.

Now consider a substitution $\phi[t/\x]$ for a ground term $t = t'r$ where $t'\in M_G$ and
$r \in R$ is a ground term of some assignment.
This means that $t$ must be substituted into each term $s\x$ occurring in $\phi$.
If $s$ equals $\bullet$, the substitution is trivial.
Therefore, assume that $s$ does not equal $\bullet$.
Then by Theorem~\ref{t:termops} an SLP $P$ for $st'$ can be determined from the SLPs for $s$ and $t'$
in polynomial time. The resulting term of the substitution is then represented by the pair $(P,r)$.
We thus have proven:
\begin{lemma}\label{l:ir-succ_subst}
  Consider a single equality $As_1\doteq Bs_2$ or $As\doteq C$ where $s_1,s_2,s \in T \cup T'$ are succinctly represented.
  Then a succinct representation of the weakest pre-condition of the equality w.r.t.\ an assignment
  $\x \Let t$ can be determined in time polynomial in the size of $t$.
  \qed
\end{lemma}
The weakest pre-condition of a post-condition $As\x \doteq Bt\y$ w.r.t.\ a procedure call $p()$ is given as
$\phi'= \phi[As/A,Bt/B]$ if the weakest pre-condition of the generic post-condition $A\x \doteq B\y$ w.r.t.\ a procedure call $p()$
is given as $\phi$. This case is similar to the case of \mbox{(non-)}ground program variable assignments. That means that, instead
of a program variable two template variables are substituted.
In order to obtain succinct representations for the terms in $\phi'$, we again can apply our techniques for computing succinct
representations for the result of the substitution of terms.
\begin{lemma}\label{l:ir-succ_adapt}
  Consider a single equality $As\x \doteq Bt\y$ (resp.\ $As \doteq Bt\x$, $As\x \doteq Bt$, or $As\x \doteq C$)
  where the occurring terms $s,t \in M_G$ are succinctly represented.
  Moreover, assume that each term of type $T \cup T'$ occurring in the weakest pre-condition $\phi$ of a generic post-condition
  $A\x \doteq B\y$ (resp.\ $A\x \doteq C$) w.r.t.\ a procedure call $p()$ is also succinctly represented.
  Then a succinct representation of the weakest pre-condition of the equality w.r.t.\ a procedure call $p()$
  can be computed in time polynomial in the number of equalities in $\phi$.
  \qed
\end{lemma}
From Lemmas~\ref{l:ir-succ_subst} and~\ref{l:ir-succ_adapt}, we conclude that the sizes and depths of occurring SLPs
during the whole fixpoint computation for determining the {\bf WP} transformers for procedures as well as the
{\bf WP} transformers for reachability, remains polynomial in
the size of the program and the numbers of equalities occurring in pre-conditions.
Accordingly, a polynomial time algorithm for inferring valid Herbrand equalities is obtained whenever we
are given polynomial time algorithms for
\begin{itemize}
\item solving systems of ground equalities, as well as for
\item approximate $T$-subsumption.
\end{itemize}

Consider a satisfiable equality of the form $As \doteq Bt$ where $s,t \in T$ are ground.
Let $A=\bullet$, then the finite set of all solutions for $B$ equals the set
\[
  \Set{uw \in \C_\Omega | \text{$s = uvt$ and $u,v \in M_G$ and $v$ is irreducable and $wt = vt$}}.
\]
In the set above, each $w$ equals $v$ where some occurrences of $\bullet$ are substituted by $t$.
That means, once the decomposition of $s$ into $uvt$ is known, then all solutions can be trivially derived.
Still there exist $2^i - 1$ many solutions if $\bullet$ occurs $i$ times in the term $v$.
Let $u = \sem{P}$ be represented by some SLP $P$ and $t = \sem{Q}r$ be represented by some SLP $Q$ and $r \in R$.
Then the set of all solutions for $B$ is \emph{succinctly represented} by the tuple
\begin{equation}\label{e:succallsol}
\langle P,v,Q,r \rangle
\end{equation}
Similarly, the finite set of all solutions for the template variable $A$ is succinctly represented by a tuple of the form~\eqref{e:succallsol}, if $B = \bullet$.
\begin{theorem}\label{t:ir-succgroundsolve}
  In the following consider only equalities of the form $As \doteq Bt$ where $s,t \in T$ are ground and succinctly represented.
  \begin{enumerate}
  \item
    It is decidable in polynomial time whether or not the equality $As \doteq Bt$ is satisfiable where $A$ or $B$ receives the value $\bullet$.
    Furthermore, if it is satisfiable,
    then a succinct representation of the form~\eqref{e:succallsol} of the set of all solutions for $A$ (resp.\ $B$) can be determined in polynomial time.
  \item
    It is decidable in polynomial time whether or not the conjunction of the two distinct equalities
    $As_1 \doteq Bt_1$ and $As_2 \doteq Bt_2$ is satisfiable where $A$ or $B$ receives the value $\bullet$.
    Furthermore, if it is satisfiable,
    then a succinct representation of the unique solution can be determined in polynomial time.
  \end{enumerate}
\end{theorem}
\begin{proof}~
  \begin{enumerate}
    \item
      Let $A=\bullet$, i.e., we then consider $s \doteq Bt$.
      If the equality is satisfiable, then $s=t't$ for some $t' \in M_G$ must hold.
      Whether or not $t$ is a suffix of $s$ is decidable in polynomial time.

      Assume that the equality is satisfiable.
      Then each solution of $B$ equals $s$ where some occurrences of $t$ are substituted by $\bullet$.
      Let $s = u v t$ for some $u,v \in M_G$ and $v$ is an irreducible element in $M_G$.
      A succinct representation $Q$ of the prefix $u$ of $s$ of length $\length{s} - \length{t} - 1$
      can be determined in polynomial time.
      Likewise, the irreducible element $v$ occurring in the unique factorization of $s$
      can be determined in polynomial time. Assume that $t$ is succinctly represented by the tuple $(P,r)$.
      Then the set of all solutions for $B$ is succinctly represented by the tuple $\langle Q,v,P,r \rangle$
      of the form~\eqref{e:succallsol}, from which the assertion of this part follows.
    \item
      Let $A=\bullet$, i.e., we then consider $s_1 \doteq Bt_1$ and $s_2 \doteq Bt_2$.
      If the conjunction of the two equalities is satisfiable, then $s_1 = t t_1$ and
      $s_2 = t t_2$ for some $t \in M_G$ must hold, i.e., $B = t$ is then a solution.
      From the succinctly represented term $s_i$ a succinct representation of the prefix $u_i$
      of length $\length{s_i} - \length{t_i}$
      and the suffix $v_i$ of length $\length{t_i}$ can be determined in polynomial time for $i=1,2$.
      If $u_1 = u_2$ and $v_1 = t_1$ and $v_2 = t_2$ holds, then the conjunction is satisfiable
      and $u_1$ is a solution for $B$. This is decidable in polynomial time.

      According to Theorem~\ref{t:solve}, $t$ is a unique solution, i.e., there exists no other
      solution $t' \neq t$. A similar argument holds for the case $B=\bullet$.
      \qedhere
  \end{enumerate}
\end{proof}
Assume that we are given a conjunction of ground equalities arising from the analysis.
Clearly, it allows to efficiently \emph{test} any candidate templates whether or not they constitute a solution.
In light of Theorem~\ref{t:ir-succgroundsolve}, the conjunction allows to \emph{infer}
a succinct representation of all valid equalities in polynomial time.
\begin{theorem}\label{t:ir-subonevarptime}
  $T$-subsumption for equalities of the form $As \doteq C$ where $s \in T \cup T'$
  are succinctly represented is decidable in polynomial time.
\end{theorem}
\begin{proof}
  Consider two equalities $As\x \doteq C$ and $At\x \doteq C$ with $s,t \in M_G$
  (resp. $As \doteq C$ and $At \doteq C$ with $s,t \in T$).
  The conjunction of the two equalities is $T$-unsatisfiable, if $s \neq t$ holds which is decidable in polynomial time.
  Otherwise, if $s = t$ holds, then one equality is subsumed by the other.
\end{proof}

In the following we show that approximate $T$-subsumption of two-variable equalities
is decidable in polynomial time, too. In order to do so we first extend the idea of
succinctly represented terms in $M_G$ to terms in the corresponding free group $F_G$.
That means that definitions of an SLP representing a term in $F_G$ are now either of the form
$X \SLPassign YZ$ for suitable unknowns $Y,Z$ or $X \SLPassign f$ where $f$ is an irreducible term in $F_G$.
The length $\length{t}$ of a term $t \in F_G$ can be determined in time linear
in the size of the SLP representing $t$ similar to any term $s \in M_G$.
The balance $\balance{t}$ of a term $t \in F_G$ which is represented by the SLP $P$
can be determined in linear time in the size of $P$ as follows:
\[\begin{array}{l@{\qquad}l}
    \balance{X_i}_P = \balance{X_j}_P + \balance{X_l}_P & (X_i \SLPassign X_j X_l) \in P \\
    \balance{X_i}_P = 0 & (X_i \SLPassign \bullet) \in P \\
    \balance{X_i}_P = 1 & (X_i \SLPassign f) \in P \text{ and } f \in M_G \setminus \{\bullet\} \\
    \balance{X_i}_P = -1 & (X_i \SLPassign f^-) \in P \text{ and } f \in M_G \setminus \{\bullet\}
\end{array}\]
An SLP in Chomsky normal form of size $k$ cannot produce a word
larger than $2^k$. Therefore, the balance of each word which it generates can be described
by $k+1$ bits. For such numbers, basic operations
as equality, addition and subtraction can be done in time linear in $k$ --- even if only single bit operations
are considered as constant time.
\begin{lemma}\label{l:grouptermops}
  Assume that all terms are succinctly represented and let $F_G$ be the corresponding
  free group of $M_G$. Then the following tasks can be realized in polynomial time:
  \begin{enumerate}
  \item\label{grouptermop0} All tasks described in Theorem~\ref{t:termops} can also be realized for terms in $F_G$.
  \item\label{grouptermop1} Given a term $w \in F_G$, determine the term $w^{-1} \in F_G$.
  \item\label{grouptermop2} Given two maximally canceled terms $u,v \in F_G$, determine $w = uv$ such that $w$ is maximally canceled.
  \item\label{grouptermop3} Given a term $w \in F_G$, determine the term $w^r$, $r \geq 1$.
  \end{enumerate}
\end{lemma}
\begin{proof}
  For the tasks described in Theorem~\ref{t:termops} it is irrelevant from which algebraic structure
  an element $f$ in a definition $X \SLPassign f$ comes. That means, it does not matter if $f \in M_G$
  or $f \in F_G$ proving assertion~\ref{grouptermop0}.

  Given an SLP $P$ representing some term $w \in F_G$, the SLP $P'$ representing the
  term $w^{-1}$ can be constructed as follows. If the definition $X \SLPassign YZ$ is included
  in $P$, then let $X' \SLPassign Z'Y'$ be included in $P'$. Otherwise, if the definition
  $X \SLPassign f$, $f \in F_G$ is included in $P$, then let $X' \SLPassign f^{-1}$ be in $P'$.
  The size and the depth of $P$ and $P'$ are the same proving assertion~\ref{grouptermop1}.

  Assume that the SLPs $P$ and $Q$ represent the terms $u$ and $v$ from $F_G$, respectively.
  By assertion~\ref{grouptermop1}, an SLP for $u^{-1}$ can be determined in polynomial time.
  Furthermore, by Theorem~\ref{t:termops}, the length $k$ of the longest
  common prefix of $u^{-1}$ and $v$ can be determined in polynomial time.
  Again by Theorem~\ref{t:termops}, an SLP $Q$ for the prefix $u'$ of $u$ of length $\length{u} - k$
  can be determined in polynomial time. Similarly, an SLP $Q'$ for the suffix $v'$ of $v$ of length $\length{v} - k$
  can be determined in polynomial time.
  Finally, an SLP for the term $w=u'v'$ can be determined in polynomial time. Since $w$ is maximally canceled, this
  proves assertion~\ref{grouptermop2}.

  The last assertion~\ref{grouptermop3} can be proven as follows.
  The case where $r=1$ is trivial. Therefore, assume that $r > 1$.
  Let the term $w$ be represented by the SLP $P$ with initial unknown $X_0$ and size $s_P$.
  The term $w^{2^k}$, $k \geq 1$ is then represented by the SLP $Q_k$ with initial unknown $N_{k+1}$ and
  the following definitions (for fresh unknowns $N_k$):
  \[\begin{array}{p{\widthof{$N_{i+1}$}}cp{1.5cm}@{\qquad}p{2.8cm}}
    $N_1$ &\SLPassign& $X_0X_0$ \\
    $N_{i+1}$ &\SLPassign& $N_iN_i$ & $1 \leq i \leq \log_2(k)$
  \end{array}\]
  Assume that the binary representation of $r$ equals $b_{\log_2(r)}\ldots b_0$
  where $b_0$ is the least significant bit and
  let $j_1<\dotsb<j_n$ equal the list of indices $j$ where $b_j=1$.
  Then we introduce the SLP $Q$ with initial unknown $M_{1}$ and the following fresh definitions:
  \[\begin{array}{p{\widthof{$N_{j+1}$}}cp{1.5cm}@{\qquad}p{2.8cm}}
    $M_{k}$ &\SLPassign& $N_{j_k}M_{{k+1}}$ & \text{for $1 \leq k < n$} \\
    $M_n$ &\SLPassign& $N_{j_n}$
  \end{array}\]
  Thus, the SLP $Q$ represents $w^r$.
  The size of $Q$ is in $\mathcal{O}(\log_2(r) + s_P)$ from which the assertion follows.
\end{proof}
A term $uv$ which is not maximally canceled, may only be constructed during checks of subsumption
when two terms $u,v \in F_G$ are concatenated. According to Lemma~\ref{l:grouptermops}, however,
a maximally canceled term corresponding to $uv$ can be determined in polynomial time.
Therefore, in the following we assume that each succinctly represented term occurring during
subsumption checks are maximally canceled.
\begin{lemma}\label{l:baseptime}
  Assume that all occurring terms are succinctly represented and maximally canceled.
  Then the assertion of Lemma~\ref{l:base} is decidable in polynomial time, i.e.,
  the question whether
  for an equality of the form $AuA^{-1} \doteq Bu'B^{-1}$ with
	$u,u' \in F_G$ and $\balance{u} = \balance{u'} = 0$, it is decidable in polynomial time,
  whether it is trivial, is equivalent to an equality $As \doteq B$ or $A \doteq Bs$
  for some $s \in M_G$, or is contradictory.
\end{lemma}
\begin{proof}
  Assume that $u$ and $u'$  are represented by the SLPs $P$ and $Q$, respectively.
  The equality is trivial iff $\sem{P} = \bullet = \sem{Q}$ which can be checked in
  constant time since we assumed that succinctly represented terms are maximally canceled.
  If $\sem{P} = \bullet \neq \sem{Q}$, or $\sem{P} \neq \bullet = \sem{Q}$ holds,
  then the equality is contradictory. The latter can also be checked in constant time.

  Otherwise, we proceed as follows.
  The length $n \leq \length{u}$ of the longest positive prefix of $u$ can be determined
  similarly to the length $\length{u}$ of $u$, and thus can be determined in time linear in the size of $P$.
  Likewise, the length $m \leq \length{u}$ of the longest negative suffix of $u$ can be determined in polynomial time,
  by first computing the inverse of $u$, i.e., $u^{-1}$ and then determining the
  longest positive prefix of $u^{-1}$.
  We then proceed by determining SLPs for the prefix $x$ of $u$ of length $n$ and
  the remaining suffix $w$ of $u$ of length $\length{u} - n$. From the SLP representing $w$ we then derive
  SLPs for the prefix $y$ of length $\length{w} - m$ and suffix of length $m$ of $w$
  such that $u = xyz^{-1}$. This can be done in polynomial time.

  Similarly, we determine succinct representations for the longest positive prefix $x'$
  of $u'$, longest negative suffix $z'^{-1}$ and $y'$ such that $u' = x'y'z'^{-1}$.

  Overall this means that the equivalent simplified
  conjunction $Ax \doteq Bx' \land y \doteq y' \land Az \doteq Bz'$
  can be determined in polynomial time.
  Since $y,y' \in M_G$, their equality can be checked in polynomial time.
  If the conjunction is satisfiable then it is equivalent
  to a solved equality $As \doteq B$ or $A \doteq Bs$ which means that either $x=sx'$ and $z=sz'$
  or $x'=sx$ and $z'=sz$ holds which can be checked in polynomial time.
\end{proof}

\begin{lemma}\label{l:basetwoptime}
  Assume that all occurring terms are succinctly represented and maximally canceled.
  Then the assertion of Theorem~\ref{t:base} is decidable in polynomial time, i.e.,
  it is decidable in polynomial time whether
  the conjunction of the two equalities $AuA^{-1} \doteq Bu'B^{-1}$ and $AvA^{-1} \doteq Bv'B^{-1}$ with $u,u',v,v' \in F_G$
  is equivalent to one solved equality, or to a single equality, or are contradictory.
\end{lemma}
\begin{proof}
  W.l.o.g.\ assume that $\balance{u} \geq \balance{v}$.
  If $\balance{v} = 0$, then from Lemma~\ref{l:baseptime} follows that
  $AvA^{-1} \doteq Bv'B^{-1}$ is either trivial, i.e., the conjunction of the two initial equalities
  is equivalent to $AuA^{-1} \doteq Bu'B^{-1}$, or is contradictory, i.e., the conjunction of the
  two initial equalities is equivalent to $AvA^{-1} \doteq Bv'B^{-1}$, or the equality is equivalent
  to one solved equality $As \doteq B$ (resp. $A \doteq Bs$). In the latter case either holds
  $u = su's^{-1}$ (resp. $u' = sus^{-1}$) and the conjunction of the two equalities is
  equivalent to $AvA^{-1} \doteq Bv'B^{-1}$ or the conjunction is contradictory.
  According to Theorem~\ref{t:termops} and Lemma~\ref{l:grouptermops} the equality check
  $u = su's^{-1}$ (resp. $u' = sus^{-1}$) can be done in polynomial time ---
  from which the assertion of this part follows.

  Otherwise, if $\balance{v} > 0$, then let $r = \balance{u} \bmod \balance{v}$
  and we derive a third equality $AwA^{-1} \doteq Bw'B^{-1}$
  such that $w = uv^{-r}$ and $w' = u'v'^{-r}$. According to Lemma~\ref{l:grouptermops}
  the terms $w,w'$ can be determined in polynomial time.
  We then start allover by considering the two equalities
  $AvA^{-1} \doteq Bv'B^{-1}$ and $AwA^{-1} \doteq Bw'B^{-1}$ where $\balance{v} \geq \balance{w}$ holds.
  This algorithm is a generalization of \emph{Euclid's algorithm}.
  Since Euclid's algorithm performs at most logarithmic many iterations~\cite[pp. 21--22]{Mollin08}
  and in each iteration we introduce logarithmic many new unknowns,
  the assertion of the theorem follows.
\end{proof}

\begin{theorem}\label{t:ir-tsubptime}
  For finite sets $E,E'$ of equalities of the form $As \doteq Bt$ where $s,t \in T \cup T'$ are succinctly represented,
  it is decidable in polynomial time whether $\bigwedge E$ approximately $T$-subsumes $\bigwedge E'$ or not,
  whenever $A$ or $B$ equals $\bullet$.
\end{theorem}
\begin{proof}
  Consider equalities of the form $As \doteq Bt$ where $s,t \in T$ are ground terms.
  According to Theorem~\ref{t:ir-succgroundsolve} $T$-subsumption is decidable in polynomial time.

  Consider the three equalities $As_i\x \doteq Bt_i\y$, $i=1,2,3$ and let
  w.l.o.g.\ $\balance{s_1} \geq \balance{s_2},\balance{s_3}$. We then derive the two equalities
  $AuA^{-1} \doteq Bu'B^{-1}$ and $AvA^{-1} \doteq Bv'B^{-1}$
  where $u \equiv s_1s_2^{-1}$, $u' \equiv t_1t_2^{-1}$, $v \equiv s_1s_3^{-1}$, and $v' \equiv t_1t_3^{-1}$
  are maximally canceled
  in polynomial time. According to Lemma~\ref{l:basetwoptime} it is decidable in polynomial time
  whether the conjunction is unsatisfiable, or equivalent to one equality, i.e.,
  equality $As_1\x \doteq Bt_1\y$ is then subsumed, or is equivalent to one solved equality.
  In the latter case from a fourth equality either follows the same solved equality and is therefore subsumed
  or is contradictory.
  A similar argument holds for equalities of the format $As \doteq Bt\x$ (resp. $At\x \doteq Bs$).

  We conclude that $T$-subsumption for equalities of the same format is decidable in polynomial time.
  Since we consider only polynomial many different formats of equalities, the assertion of the theorem follows.
\end{proof}

\begin{theorem}
  Assume that all right-hand sides of assignments of an initialization-restricted program contain at most one variable.
  Then for every program point $u$ and program variables $\x$ and $\y$,
  a succinct representation of the form~\eqref{e:succallsol} of the set of all valid two-variable Herbrand equalities between $\x$ and $\y$,
  can be determined in time polynomial in the size of the program.
  \qed
\end{theorem}
 
\subsection{Polynomial-time Algorithms for Unrestricted Programs}\label{ss:unrestricted_ptime}

For unrestricted programs there need not exist a unique factorization for every possible run-time value.
Only for large terms, i.e., terms in $L = M_G \bar R$, unique factorizations are possible.
Accordingly, a large term $t=t'r$ where $t' \in M_G$ and $r \in \bar R$ is \emph{succinctly represented}
by a pair $(P,r)$ where $P$ is an SLP such that $\sem{P} = t'$.
We remark that the size of the term $r$ is polynomially bound by the size of the program
and therefore can be represented explicitly.

For small terms, i.e., terms in $S$, on the other hand, we cannot hope for unique factorizations.
Since the size
of each small term is bound by the size of the program, each small term $s \in S$ is \emph{succinctly represented}
by a pair $(P,s)$ where $P$ is an SLP such that $\sem{P} = \bullet$.

Similar as for initialization-restricted programs, during the weakest pre-condition calculation,
we assume that each occurring term is succinctly represented.
Let us again consider the operation substitution.
In order to obtain polynomial algorithms,
we must ensure that substitution of succinctly represented terms is polynomial.
Consider the non-ground terms $s\x,t\y$ where $s,t \in M_G$.
Then the succinct representation of the resulting term $(s\x)[t\y/\x]$ is determined
in a similar way as for initialization-restricted programs, and therefore can be constructed in
polynomial time.
Now consider the terms $s\x,t$ where $s \in M_G$ and $t \in T$ is ground.
Then the resulting term of the substitution $(s\x)[t/\x]$ is given as $st$.
If the term is large, then in order to succinctly represent $st$,
the unique factorization must be determined in polynomial time.
\begin{lemma}\label{l:succgroundconcat}
  Given succinctly represented terms $s,t$ where $s \in M_G$ and $t \in T$.
  Then a succinct representation of $st \in T$ can be determined in time
  polynomial in the size of a maximal element in $\bar R$.
\end{lemma}
\begin{proof}
  First assume that $t \in L$ is \emph{large}. This means that $t$ is represented by a pair $(Q,r)$
  where $Q$ is an SLP for some term $t'\in M_G$ and $r$ is a term in $\bar R$.
  Then the unique factorization of $st$ is given by $s'r$ where $s' = st'$ --- for which an SLP can
  be constructed from an SLP for $s$ and $Q$ by introducing one fresh unknown together with a single definition.

  Finally, assume that the term $t$ is \emph{small}.
  Given an SLP $P$ for the term $s$, our goal is to determine the unique factorization $st =s'r$ with $s'\in M_G$ and $r\in\bar R$.
  If $s=\bullet$, nothing must be done.
  Otherwise, assume that $s$ is given as the factorization $s_1 \cdots s_k$.
  Then we consider the factorization $s_1 \cdots s_{k-1} s'_k$
  where $s'_k = s_k[t/\bullet]$. This factorization equals the term $st$.
  If $k=1$, we are done.
  If $k>1$ and the term $s'_k = s_k[t/\bullet]$ is contained in the set $\bar R$ of minimally large terms,
  i.e., $s'_k$ is not a small term, then we have found the unique factorization of $st$.
  Otherwise, we proceed by constructing $s'_{k-1}=s_{k-1}[s'_k/\bullet]$
  and so on, until either we exhausted the factors of $s$ or obtained the factorization
  $st=s' s'_{k-h}$ where $s'= s_1\cdots s_{k-h-1}$ and $s'_{k-h}= s_{k-h}\cdots s_k t\in\bar R$.
  Since the size of every term in $\bar R$ is bounded by the size of the input program, so is the number $h$.
  For every length $h\leq h'\leq k$, SLPs for the intermediately occurring prefixes of $s$ can
  be determined in time $\mathcal{O}(d)$ by Theorem~\ref{t:termops},
  if $d$ is the depth of the SLP for $s$.
\end{proof}

The previous Lemma~\ref{l:succgroundconcat} enables us to state the following two lemmas:
\begin{lemma}\label{l:succ_subst}
  Consider a single equality $As_1 \doteq Bs_2$ or $As \doteq C$ where $s_1,s_2,s \in T \cup T'$ are succinctly represented.
  Then a succinct representation of the weakest pre-condition of the equality w.r.t.\ an assignment
  $\x \Let t$ can be determined in time polynomial in the size of $t$ and in the size of a maximal element in $\bar R$.
  \qed
\end{lemma}
\begin{lemma}\label{l:succ_adapt}
  Consider a single equality $As\x \doteq Bt\y$ (resp.\ $As \doteq Bt\x$, $As\x \doteq Bt$, or $As\x \doteq C$)
  where the occurring terms $s,t \in M_G$ are succinctly represented.
  Moreover, assume that each term of type $T \cup T'$ occurring in the weakest pre-condition $\phi$ of a generic post-condition
  $A\x \doteq B\y$ (resp.\ $A\x \doteq C$) w.r.t.\ a procedure call $p()$ is also succinctly represented.
  Then a succinct representation of the weakest pre-condition of the equality w.r.t.\ a procedure call $p()$
  can be computed in time polynomial in the number of equalities in $\phi$ and in the size of a maximal element in $\bar R$.
  \qed
\end{lemma}
The proofs of the lemmas are analogous to the proofs of Lemma~\ref{l:ir-succ_subst} and~\ref{l:ir-succ_adapt} except
that for the substitution we also need Lemma~\ref{l:succgroundconcat}.

In order to compute solutions in polynomial time for the constraint systems \textbf{S} and \textbf{R},
$T$-subsumption for one-variable and approximate $T$-subsumption for two-variable
equalities must be decidable in polynomial time.
\begin{theorem}\label{t:succonevarsubsumption}
  For finite sets $E,E'$ of equalities of the form $As \doteq C$ where $s \in T \cup T'$ are succinctly represented
  it is decidable in polynomial time whether $\bigwedge E$ $T$-subsumes $\bigwedge E'$ or not.
\end{theorem}
\begin{proof}
  Consider two distinct equalities $As\x \doteq C$ and $At\x \doteq C$.
  If the conjunction of them is satisfiable, then $s = wu$ and $t = wv$
  for some $u,v,w \in M_G$
  such that $w$ is a longest common prefix of $s,t$ and $u \neq v$ but $u\x = v\x$ must hold.
  According to Theorem~\ref{t:termops} the longest common prefix of two succinctly
  represented terms can be determined in polynomial time. Similar representations for $u,v$
  can be determined in polynomial time, too.
  Assume that the sizes of the terms $u,v$ are not bound by the maximal size of an element in $\bar R$,
  then the terms $u\x,v\x$ are large terms no matter what ground term the variable $\x$ is actually bound to.
  But then the terms $u,v$ must have a common prefix which is
  a contradiction to the assumption that $w$ is the longest common prefix of $s,t$ if the conjunction is satisfiable.
  Therefore, assume that the sizes of the terms $u,v$ are bound by the maximal size of an element in $\bar R$.
  Then the most general unifier of $u\x = v\x$ can be determined in polynomial time.
  Assume the most general unifier maps $\x$ to the ground term $t' \in S$. Then the initial
  conjunction is equivalent to the conjunction of the equalities $As\x \doteq C$ and $Att' \doteq C$
  where the latter equality does not contain any program variable.
  According to Lemma~\ref{l:succgroundconcat} a succinct representation of the term $tt'$ can be determined
  in polynomial time. Overall, the equivalent conjunction can be determined in polynomial time.

  For equalities which contain no program variable we have the following result.
  Consider two equalities $As \doteq C$ and $At \doteq C$ where $s,t \in T$ are ground.
  If $s = t$, then one equality subsumes the other. Otherwise, if $s \neq t$, then
  the conjunction of them is unsatisfiable. For succinctly represented terms such equality checks
  can be performed in polynomial time from which the assertion of the theorem follows.
\end{proof}

\begin{theorem}\label{t:succgroundsubsumption}
  For finite sets $E,E'$ of equalities of the form $As \doteq Bt$ where $s,t \in T \cup T'$ are succinctly represented,
  it is decidable in polynomial time whether $\bigwedge E$ approximately $T$-subsumes $\bigwedge E'$ or not,
  whenever $A$ or $B$ equals $\bullet$.
\end{theorem}
\begin{proof}
  \textbf{Ground equalities:} Let us first consider only equalities of the form $As \doteq Bt$ where
  $s,t \in T$ are ground.
  Then in the following we assume that each conjunction of equalities is not trivially unsatisfiable, i.e.,
  there exist no two equalities of the form $As \doteq Bt$ and $As \doteq Bt'$ where $t \neq t'$,
  or vice versa, where the roles of $A$ and $B$ are interchanged.
  If two succinctly represented terms in $T$ are equal or not, is decidable in polynomial time.

  First consider equalities of the form $As \doteq Bt$ where $s,t \in L$ are large terms.
  The proof is analogous to the corresponding proofs for Theorem~\ref{t:ir-succgroundsolve}
  where the set $T$ is replaced with the set $L = M_G \bar R$, i.e.,
  instead of the set $R$ we rely on the set $\bar R$ of unique end marker terms.

  Now consider three equalities $As_i \doteq Bt_i$ where $s_i \in L$ are large terms and
  $t_i \in S$ are small terms for $i=1,2,3$.
  For the proof of this case, we require to extend the notion of substitution to
  a replacement of occurrences of arbitrary subterms.
  Consider arbitrary ranked terms $s,t,t' \in \T_\Omega(X \cup \{\bullet\})$. Then by $s[t/t']$
  we denote the term where all occurrences of $t'$ in $s$ are replaced by the term $t$.
  Formally, if $s$ does not contain the subterm $t'$, then $s[t/t'] = s$.
  Otherwise, if $s$ contains the subterm $t'$, then let $s = s't'$ such that $s' \in \C_\Omega$ does not contain the subterm $t'$.
  Then $s[t/t'] = s't$.

  We then proceed as follows. Assume that there exist $i,j \in [1,3]$ such that $t_i$ does not occur in $s_j$.
  If $i=j$, then the single equality $As_i \doteq Bt_i$ is not satisfiable.
  Therefore assume now that $i \neq j$. If the conjunction of the three equalities is satisfiable,
  then the solution for $B$ must not contain occurrences of $t_i$,
  i.e., $u = s_i[\bullet/t_i]$ is the only possible solution for $B$.
  If all three equalities are satisfied
  by this solution, then the first two would already have $B=u$ as their unique solution.
  Accordingly, the third equality is subsumed. Whether or not $B=u$ is a solution can be decided in polynomial time.

  In the following we therefore assume that for each $i,j \in [1,3]$ the term $t_i$
  occurs at least once in the term $s_j$.
  We define an equivalence relation of terms as follows.
  Let $\hash$ denote a fresh symbol and let $s,s',t,t' \in T$.
  If $t,t'$ are incomparable, i.e., there exists no $u \in M_G$ such that $t = u t'$ or
  $t' = u t$, then the terms $s,s'$ are equivalent modulo the terms $t,t'$ if
  $s[\hash/t,\hash/t'] = s'[\hash/t,\hash/t']$ holds. Otherwise, if there exists a $u \in M_G$
  such that $t = u t'$, then the terms $s,s'$ are equivalent modulo the terms $t,t'$ if
  $(s[\hash/t])[\hash/t'] = (s'[\hash/t])[\hash/t']$ holds. The case where $t' = u t$ holds is similar.
  For all three cases we can decide which term to substitute first by comparing the size of both terms $t,t'$.
  That means, if $t = u t'$ (resp.\ $t' = u t$) holds, then $\sizeT{t} > \sizeT{t'}$ (resp. $\sizeT{t'} > \sizeT{t}$)
  must hold, too. In case the terms are incomparable it does not matter in which order we substitute the
  terms. Assume $\sizeT{t} \geq \sizeT{t'}$, then the terms $s,s'$ are equivalent modulo the terms
  $t,t'$ if $(s[\hash/t])[\hash/t'] = (s'[\hash/t])[\hash/t']$ holds, which we denote by $(s = s') \bmod {t,t'}$.
  We extend the equivalence relation as follows. Let $t'' \in T$ and assume that $\sizeT{t} \geq \sizeT{t'} \geq \sizeT{t''}$, then
  $s$ and $s'$ are equivalent modulo the terms $t,t',t''$ if
  $((s[\hash/t])[\hash/t'])[\hash/t''] = ((s'[\hash/t])[\hash/t'])[\hash/t'']$ holds which we denote by
  $(s = s') \bmod {t,t',t''}$.
  We observe that if the conjunction $As_1 \doteq Bt_1 \land As_2 \doteq Bt_2$ is
  satisfiable, then $s_1,s_2$ differ only in some occurrences of $t_1,t_2$.
  That means that $(s_1 = s_2) \bmod {t_1,t_2}$ must hold.
  A similar argument holds for the conjunction $As_1 \doteq Bt_1 \land As_3 \doteq Bt_3$
  and for the conjunction $As_2 \doteq Bt_2 \land As_3 \doteq Bt_3$.
  Observe that the other direction does not necessarily hold, i.e., if the conjunction is not satisfiable,
  then $(s_1 \neq s_2) \bmod {t_1,t_2}$ need not hold. For example, consider the conjunction
  $Af(a,b) \doteq Ba \land Af(b,a) \doteq Bb$ which is not satisfiable but
  $f(a,b)[\hash/a,\hash/b] = f(\hash,\hash) = f(b,a)[\hash/a,\hash/b]$ holds.
  However, we claim that if
  \begin{align}
    &(s_1 = s_2) \bmod {t_1,t_2} \label{eq:s12}\\
    &(s_1 = s_3) \bmod {t_1,t_3} \label{eq:s13}\\
    &(s_2 = s_3) \bmod {t_2,t_3} \label{eq:s23}\\
    &(s_1 = s_2 = s_3) \bmod{t_1,t_2,t_3} \label{eq:s123}
  \end{align}
  holds, then the conjunction $As_1 \doteq Bt_1\land As_2 \doteq Bt_2\land As_3 \doteq Bt_3$
  is satisfiable. Since for $A=\bullet$, the two first equalities uniquely determine the solution for $B$,
  we conclude that the third equality is subsumed. Our claim is proved as follows.

  In the following we denote by $s|_p = t$ that a term $s \in T$ contains at position $p$ the subterm $t \in T$.
  From~\eqref{eq:s123} follows that all three terms $s_1,s_2,s_3$ share a common pattern $u'$ which is obtained by successively
  replacing all subterms $t_1,t_2,t_3$ with $\hash$, where we proceed from the larger to the smaller terms.
  From $u'$, we then construct a solution $u$ for $B$ by replacing the occurrences of $\hash$ in $u'$ with
  terms $t_1,t_2,t_3$ or $\bullet$.
  Let $p$ be any position of a leaf $\hash$ in $u'$.
  \begin{align}
    &\text{If } s_1|_p = s_2|_p    \text{ then let } u|_p = s_1|_p  \label{eq:sol1}\\
    &\text{If } s_1|_p \neq s_2|_p \text{ then let } u|_p = \bullet \label{eq:sol2}
  \end{align}
  We claim that the resulting term $u$ is indeed a solution for $B$, which satisfies all three equalities.
  If $u|_p = t_1$ then according to~\eqref{eq:sol1} $s_1|_p = s_2|_p = t_1$ and from~\eqref{eq:s23} follows that $s_3|_p = t_1$.
  If $u|_p = t_2$ then according to~\eqref{eq:sol1} $s_1|_p = s_2|_p = t_2$ and from~\eqref{eq:s13} follows that $s_3|_p = t_2$.
  Otherwise, assume that $u|_p = t_3$. Then according to~\eqref{eq:sol1} $s_1|_p = s_2|_p = t_3$.
  If $s_3|_p = t_1$, then~\eqref{eq:s23} implies that $s_2|_p = t_1$ which is a contradiction.
  Similarly, if $s_3|_p = t_2$, then~\eqref{eq:s13} implies that $s_1|_p = t_2$ which again is a contradiction.
  Therefore, $s_3|_p = t_3$ must hold.
  In total we have, if $u|_p = t_i$, then $s_1|_p = s_2|_p = s_3|_p = t_i$ for $i=1,2,3$.
  Now consider the case where $u|_p = \bullet$.
  Assume that $s_1|_p = t_2$, then from~\eqref{eq:s12} and~\eqref{eq:sol2} follows $s_2|_p = t_1$.
  However, then from~\eqref{eq:s13}
  follows that $s_3|_p = t_2$ and from~\eqref{eq:s23} follows that $s_3|_p = t_1$ which is a contradiction.
  Hence, $s_1|_p = t_1$ and $s_2|_p = t_2$ must hold.
  A similar argument holds for $s_3|_p = t_3$ from which
  we conclude that $s_i = u t_i$ for $i=1,2,3$. Therefore, $B=u$ is indeed a solution satisfying all three equalities.
  This complete the proof of our claim.

  What remains to prove is that the equality checks $(s_i = s_j) \bmod {t_i,t_j}$
  and $(s_i = s_j) \bmod {t_1,t_2,t_3}$ can be done in polynomial time.
  For that we must show that from an arbitrary succinctly represented large term,
  a succinctly represented and uniquely factorized term can be derived where certain small terms are substituted by a fresh
  symbol. We explain the idea for the test $(s_1 = s_2) \bmod {t_1,t_2}$.
  W.l.o.g.\ let $\sizeT{t_1} \geq \sizeT{t_2}$ and $\sigma = [\hash/t_1][\hash/t_2]$.
  Let $G' = \Set{g\sigma | g \in G}$ and $R' = \Set{r\sigma | r \in R}$.
  We extend the factorization of terms in $T = M_G R$ to terms in $T' = M_{G'} R'$.
  In Section~\ref{s:general} we have partitioned the set of terms $T$ into non-uniquely factorizable small terms $S$ and
  uniquely factorizable large terms $L$, i.e., $T = M_G R = S \cupplus L$. We proceed along the same line for $T'$ which we partition
  into $\hash$-small terms $S'$ which are non-uniquely factorizable, and into $\hash$-large terms $L'$
  which are uniquely factorizable. The set $S'$ equals then the set ${(G' \cup R')}^*$ where $*$ is the subterm closure,
  and the set $L'$ equals the set $M_{G'} R' \setminus S'$.
  We call a term minimally $\hash$-large if it is a minimal term in $L'$.
  The (finite) set of all minimally $\hash$-large terms is denoted by $\bar R'$.
  Then every $\hash$-large term $s'\in L'$ can be uniquely factored
  into $s'=u'r'$ where $u'\in M_{G'}$ and a term $r' \in \bar R'$ which is minimally $\hash$-large.
  If one of the terms $s_1\sigma$ or $s_2\sigma$ is not $\hash$-large, then the size of that term is polynomial. Therefore,
  the equality test can be realized in polynomial time as well.
  Accordingly assume that the terms $s_1\sigma$ and $s_2\sigma$ are both $\hash$-large.
  In this case, our goal is to determine from the succinct representations of the factorizations of $s_1,s_2$,
  succinct representations for the factorizations of $s_1\sigma,s_2\sigma$ which then can be compared in polynomial time.
  For that, consider a factorization $s = u_1 \cdots u_k r$ of a large term $s\in T$ into irreducible factors $u_i\in M_G$
  and a minimally large term $r \in \bar R$, and assume that $s'=s\sigma$ is $\hash$-large.
  Then there is a maximal index $j$ such that $r'_j = (u_j \cdots u_k r)\sigma$ is $\hash$-large.
  This index can be found in polynomial time. Moreover, $r'_j$ can then be uniquely factored in
  polynomial time into $r'_j = u'r'$ for a minimally $\hash$-large term $r' \in \bar R'$ and $u'\in M_{G'}$.
  Then the unique factorization of $s'$ is given by:
  \[
  s'= v'_1 \cdots v'_{j-1} u' r'
  \]
  where for each $i$, $v'_i$ is a factorization of $u_i\sigma$ into irreducible factors in $M_{G'}$.
  Note that the \emph{lengths} of the factorizations $v'_i$ are bounded by the sizes of the corresponding factors and
  thus of the sizes of right-hand sides of the input program. Therefore these factorizations can be obtained in polynomial
  time as well. These factorizations then allow us to construct from an SLP for $u_1 \cdots u_{j-1}$, an SLP for
  $v'_1 \cdots v'_{j-1} u'$.
  Altogether, we obtain a succinct representation for $s'$ from a succinct representation of $s$ in
  polynomial time from which the assertion of this part follows.

  A similar argument holds for equalities of the form $As \doteq Bt$ where $s \in S$ is small and $t \in L$ is large.

  \textbf{Non-ground Equalities:} Now we consider equalities which contain at least one program variable.
  We first prove that $L$-subsumption for finite conjunctions of equalities of
  the same format is decidable in polynomial time.

  For equalities of the formats $[F_{\x,\y}], [F_{\x,\cdot}], [F_{\cdot,\x}]$ the proofs are
  analogous to the corresponding proofs of Theorem~\ref{t:ir-tsubptime} where the set $T$ is replaced
  with the set $L = M_G \bar R$, i.e., instead of the set $R$ we rely on the set $\bar R$ of unique
  end marker terms.

  Now consider two equalities $As \doteq Bt\x$ and $As \doteq Bt'\x$ where $s \in S$ is a small term, i.e.,
  equalities of the format $[F_{s,\x}]$. Then the first equality $L$-subsumes the second equality, if $t = t'$ holds.
  This is decidable in polynomial time. Otherwise, the conjunction is $L$-unsatisfiable.
  A similar argument holds for two equalities of the format $[F_{\x,s}]$.

  We conclude that $L$-subsumption for equalities of the same format is decidable in polynomial time.
  In order to decide $T$-subsumption between conjunctions of sets $E,E'$ of equalities of the same format,
  for each small substitution $\sigma$, $L$-subsumption between $E\sigma$ and $E'\sigma$ has to be decided.
  Since there exist at most polynomial many small substitutions and formats of equalities,
  we conclude that approximate $T$-subsumption is decidable in polynomial time.
\end{proof}
We showed that solutions to the constraint systems {\bf S} and {\bf R} can be determined in polynomial time.
For \emph{initialization-restricted} programs we also showed that a succinct representation of all solutions
can be determined in polynomial time. Whereas for \emph{unrestricted} programs we show that given a
candidate solution for the template variables $A$ and $B$ where at least one equals $\bullet$,
it is decidable in polynomial time whether or not the solution holds.
\begin{theorem}\label{t:succgroundverify}
  Given a term $u \in \C_\Omega$ and an equality $As \doteq Bt$ where $s,t \in T$ are ground and succinctly represented.
  Then it is decidable in time polynomial in the size of the term $u$ and in the size of a maximal term in $S$,
  whether or not $A=u$ and $B=\bullet$, or vice versa, $A=\bullet$ and $B=u$ is a solution for the equality.
\end{theorem}
\begin{proof}
  Let us first consider the case for $A=u$ and $B=\bullet$, i.e., decide if $us = t$ holds or not.

  Assume that $s,t \in L$ are large terms.
  If $u \in M_G$, i.e., all subterms of $u$ are small, then $u$ must be a prefix of $t$, and $s$ must be a suffix of $t$, i.e.,
  $us = t$ must hold. This is decidable in time polynomial in the size of $u$ and polynomial
  in the sizes and lengths of the SLPs representing $s,t$.
  Otherwise, if $u$ contains large terms as subterms, i.e., $u \in \C_\Omega \setminus M_G$.
  Then $u=vw$ for some $v \in M_G$ and some irreducible element $w \in \C_\Omega \setminus M_G$ must hold.
  Furthermore, $t=vw's$ for some irreducible element $w' \in M_G$ such that $w$ equals $w'$ where some occurrences of $\bullet$
  are substituted by $s$ must hold. This is decidable in time polynomial in the size of $u$
  and polynomial in the lengths of the SLPs representing $s,t$.

  Now consider the case where $s \in S$ is small and $t \in L$ is large. Then $us = t$ is
  decidable in time polynomial in the size of $u$ and in the size of $s$ which is bound by
  the size of the program.

  Otherwise, if $s \in L$ is large and $t \in S$ is small, then the equality is not satisfiable.

  Now assume that both $s,t \in S$ are small. Whether or not $us$ is a small term and if $us$ equals $t$
  is decidable in polynomial time.

  Furthermore, verifying if $A=\bullet$ and $B=u$ is a solution for the equality is similar
  from which the assertion of this theorem follows.
\end{proof}
Finally this enables us to state our main result for unrestricted programs and one- or two-variable equalities:
\begin{theorem}
  Assume that $p$ is a program where all right-hand sides of assignments contain at most one variable.
  Then for every program point $u$ of $p$ and every equality of the form
  $\x \doteq t$ where $t\in T\cup T'$, it can be verified in time polynomial
  in the size of the program as well as the size of $t$ whether or not the equality is an invariant.
  \qed
\end{theorem}

Recall that
for \emph{initialization-restricted} programs, each possible run-time value can be uniquely factorized.
This property enabled us to derive in polynomial time from a ground equality $As \doteq Bt$ where $s,t \in T$
all possible solutions for the template variables $A$ and $B$ where at least one equals $\bullet$.
Consider the case where $A=\bullet$ and assume that $s=uvt$ for some $u,v \in M_G$ where $v$ is an irreducible element.
Then each solution for $B$ has $u$ as a prefix --- which might be exponentially large. That means, that the
solutions only differ in the very last factor which can be derived from the element $v$.
Accordingly, we were able to provide a succinct characterization of \emph{all} solutions.
The situation is more complicated for \emph{unrestricted} programs.
For these, only weaker forms of factorization are available. Thus,
substitutions of right-hand sides may still result in terms which are still \emph{small} and
therefore cannot be uniquely factorized.
\begin{example}
  Assume that $a,b \in S$ are small terms and $r \in \bar R$ is a minimally large term.
  Then consider the uniquely factorized equalities
  \begin{align*}
    A \enspace f(\bullet,\bullet) \enspace g(a,h(b,\bullet,a),h(b,\bullet,b)) \enspace r &\doteq B \enspace a \\
    A \enspace f(g(a,\bullet,\bullet),g(b,\bullet,\bullet)) \enspace h(b,\bullet,b) \enspace r &\doteq B \enspace b
  \end{align*}
  Since $a$ and $b$ are small terms and the template variable $A$ is applied to large terms,
  $B$ cannot equal $\bullet$ in any possible solution.
  Therefore, now assume that $A=\bullet$.
  Then the unique solution for $B$, satisfying both equalities, equals
  \[
    f(g(a,h(b,r,\bullet),h(b,r,b)),g(\bullet,h(b,r,\bullet),h(b,r,b)))
  \]
  Thus, all three factors from the original equality are collapsed into a single irreducible term for $B$.
  This irreducible term contains the large term $h(b,r,b)$ as a subterm and is contained in $\C_\Omega \setminus M_G$.
  \qed
\end{example}
From the previous example we conclude that,
in contrast to \emph{initialization-restricted} programs,
we have for \emph{unrestricted} programs that a solution
is not necessarily in $M_G$ but might very well also be in $\C_\Omega \setminus M_G$.
The solution need not reflect the factorizations of the terms of the initial equalities.
The factorization of terms, however, was the basis of our compression scheme via SLPs.
Accordingly, it remains unclear how to derive compressed representations of solutions in polynomial time.

 \subsection{Complexity results for Verifying Multi-Variable Equalities}

Let us now consider a multi-variable invariant candidate such as $\x \doteq f(g\y,\z)$.
In this case, the right-hand side $f(g\y,\z)= t[\y,\z]$ where $t$ is the (multi-variable) pattern
$t= f(g\bullet_1,\bullet_2)$ for distinct variables $\bullet_1,\bullet_2$.
Now consider a generic post-condition $\x' \doteq f(gA\y',B\z')$ which might occur during
the proof that the given equality indeed is an invariant at some program point.
In contrast to the pattern, the terms which may be substituted into one of the  program variables or
the template variables $A,B$ of the right-hand side during the fixpoint iteration may grow exponentially
deep and therefore should be succinctly represented.
Now consider a term which is substituted into the left-hand side.
For this term, the root must be deconstructed according to the constructors occurring in $t$.
This deconstruction can also be realized for succinctly represented terms in polynomial time.

For the multi-variable case we observe that during the {\bf WP} computation we obtain for a post-condition a
conjunction possibly containing one-, two-, and multi-variable equalities.
A conjunction of two multi-variable equalities which coincide in the left-hand side and the pattern of the right-hand side
is equivalent to a conjunction of one of them and polynomial many one- and two-variable equalities.
Such an equivalent conjunction can be determined in time polynomial in the size of the invariant candidate.
Since for conjunctions of one- and two-variable equalities approximate $T$-subsumption is decidable in polynomial time,
approximate $T$-subsumption is also decidable in polynomial time for conjunctions containing multi-variable equalities,
i.e., we have proven the following lemma:
\begin{lemma}
  For finite sets $E,E'$ of one-, two-, and multi-variable equalities where each term in $T \cup T'$
  is succinctly represented, it is decidable in polynomial time whether $\bigwedge E$ approximately
  $T$-subsumes $\bigwedge E'$ or not.
  \qed
\end{lemma}
We note that from a single invariant candidate $\x \doteq t$ where $t \in \T_\Omega(\X)$,
exponentially many generic multi-variable post-conditions can be derived, i.e.,
we have exponentially many different formats of multi-variable equalities.
Still, we have:
\begin{theorem}
  Assume that $p$ is a program where all right-hand sides of assignments contain at most one variable.
  Then for every program point $u$ of $p$ and every multi-variable Herbrand equality
  $\x \doteq t$ where $t \in \T_\Omega(\X)$ has at most $k$ variables, it can be verified in time polynomial
  in the size of the program as well as the size of $t$, and exponential only in $k$
  whether or not the equality is an invariant.
  \qed
\end{theorem}
 \section{Conclusion}\label{s:conclusion}

We have provided an analysis which infers all inter-procedurally valid Herbrand equalities
for programs where all assignments are taken into account whose right-hand sides depend on at most one variable.
The novel analysis is based on three main ideas.
First, we restricted general satisfiability, subsumption
and equivalence to satisfiability, subsumption and equivalence w.r.t.\ a
set of values subsuming all possible run-time values of a given program.
Together with our factorization theorem, this allowed us to apply the monoidal methods from~\cite{Gulwani07}
to effectively infer all inter-procedurally valid two-variable Herbrand equalities,
at least for programs, which we called \emph{initialization-restricted}.
In the second step, we abandoned this restriction by introducing the
extra distinction between \emph{large} values (which can be uniquely factored) and \emph{small} ones (of which
there are only finitely many).
Finally, we showed how general Herbrand equalities could be handled.
In Section~\ref{s:polytime} we then provided a polynomial-time algorithm which infers all
two-variable Herbrand equalities for \emph{initialization-restricted} programs.
For \emph{unrestricted} programs, we were at least able to verify in polynomial time
whether or not a given equality is an invariant at a given program point.
This algorithm could also be extended to \emph{general} Herbrand equalities (possibly containing more
than one two variables).

Still, it remains open whether general Herbrand invariants can be inferred also for programs where
right-hand sides may contain more than one variable.

\subsubsection*{Acknowledgments.}
The authors would like to thank the anonymous reviewers for their valuable comments and suggestions.
 \appendix

\bibliographystyle{abbrv}

\section{Global and Local Program Variables}\label{a:locals}

In this appendix we indicate how our method can be extended in order to also deal with programs which contain global
as well as local variables. For this we first extend our program model from Section~\ref{s:programs} as follows.
We now assume that the (finite) set of program variables $\X$ contains a subset $\mathbf{L} \subseteq \X$
consisting of \emph{local} program variables, while the remaining variables are considered as \emph{global}.
The scope of local variables is meant to be restricted to the body of the current procedure.
At the start of a procedure call, the fresh local variables are assumed to be \emph{uninitialized}, i.e., have
any value, whereas at procedure exit, the current locals are abandoned while the locals of the calling procedure
are recovered.
By means of global variables, this simple model already allows to realize call-by-value variable passing as well as
the returning of functional results.

In order to deal with a non-empty set of locals, we enhance the weakest pre-condition calculus by an operator
$\mathcal{H}$ which takes the {\bf WP}-transformation realized by the body of a procedure as an argument, and
returns the {\bf WP}-transformation of the procedure call.
For a given {\bf WP}-transformation $f$, the {\bf WP}-transformation $\mathcal{H}(f)$ is defined as follows.
\[
\begin{array}{lll@{\quad}l}
\mathcal{H}(f)(A\x \doteq C) &=& \phantom{(}\forall\textsf{L}.\,f(A\x \doteq C)
			&\text{$\x$ global}		\\
\mathcal{H}(f)(A\x \doteq B\y) &=& \phantom{(}\forall\textsf{L}.\,f(A\x \doteq B\y)
			&\text{$\x,\y$ global}		\\
\mathcal{H}(f)(A\x \doteq B\y) &=& (\forall\textsf{L}.\,f(A\x \doteq C))[B\y/C]
			&\text{$\x$ global, $\y$ local}		\\
\mathcal{H}(f)(A\x \doteq B\y) &=& (\forall\textsf{L}.\,f(A\y \doteq C))[B/A,A\x/C]
			&\text{$\x$ local, $\y$ global}		\\
\mathcal{H}(f)(e)	&=& \phantom{(}e	&\text{$e$ contains no globals}	\\
\end{array}
\]
where $\textsf{L}$ is the sequence of local variables in $\mathbf{L}$.
Accordingly, the constraints for call edges in the constraint systems {\bf S} and {\bf R}
must be changed into:
\begin{align*}
\semT{u} &\implies \mathcal{H}(\semT{s_p}) \circ \semT{v} && \text{for each $(u,p(),v) \in E$}
\intertext{and}
\semR{v} &\implies \semR{u} \circ \mathcal{H}(\semT{s_p}) && \text{for each $(u,p(),v) \in E$}
\end{align*}
respectively.
 
\end{document}